\theoremstyle{plain}
\newtheorem{theorem}{Theorem}
\theoremstyle{remark}
\theoremstyle{plain}
\newtheorem{corollary}[theorem]{Corollary}
\newtheorem{lemma}[theorem]{Lemma}
\newtheorem{proposition}[theorem]{Proposition}
\numberwithin{theorem}{section}
\numberwithin{equation}{section}
\newcommand{\K}{\mathrm{K}}
\def\N{{\mathbb N}}
\def\Q{{\mathbb Q}}
\def\R{{\mathbb R}}
\newcommand{\E}{{\mathbb E}}
\renewcommand{\P}{{\mathbb P}}
\newcommand{\dint}{\mathrm d}
\newcommand{\mbf}[1]{\mathbf{#1}}
\begin{document}

\title[Branching processes in multicomponent coagulation]{Gelation and localization in multicomponent coagulation with multiplicative kernel through branching processes}

\author{Jochem Hoogendijk}
\address{Mathematical institute\\
    Utrecht University\\ 3508 TA Utrecht\\ The
    Netherlands}\email{j.p.c.hoogendijk@uu.nl}

\author{Ivan Kryven}
\address{Mathematical institute\\ 
    Utrecht University\\ 3508 TA Utrecht\\ The
    Netherlands}\email{i.v.kryven@uu.nl}

\author{Camillo Schenone}\email{camillo.schenone@gmail.com}

\subjclass[2020]{60J80, 82C05}
\begin{abstract}
    The multicomponent coagulation equation is a generalisation of the Smoluchowski coagulation equation in which size of a particle is described by a vector. As with the original Smoluchowski equation, the multicomponent coagulation equation features gelation when supplied with a multiplicative kernel. Additionally, a new type of behaviour called localization is observed due to the multivariate nature of the particle size distribution. Here we extend and apply the branching process representation technique, which we introduced to study differential equations in our previous work, to find a concise  probabilistic solution of the multicomponent coagulation equation supplied with monodisperse initial conditions and provide short proofs for the gelation time and localization.\\\
    
    {\noindent\bf Keywords:} Smoluchowski coagulation equation, multiplicative coalescence, multicomponent coagulation, gelation, localization, branching processes.
\end{abstract}
\maketitle

\section{Introduction}
Consider the following system of a countable number of ordinary differential equations 
\begin{equation}\label{eq:smoluchowski}
    \frac{\mathrm{d}w_n}{\mathrm{d}t} = \frac{1}{2} \sum\limits_{k+l = n} \K(k, l) w_k w_l - \sum\limits_{k=1}^\infty \K(n, k) w_n w_k, \qquad n \in \mathbb{N},
\end{equation}
subject to initial conditions $\{w_n(0)\}_{n \in \N}$.
Originally introduced by Smoluchowski \cite{smoluchowski1916} as a model for evolution of the size distribution $w_n(t)$ of merging particles (also called colloids or clusters), this equation became one of the standard equations in mathematical physics. 
If we interpret $n\in \N$ as the particle size and $w_n$ as a (non-normalised) size distribution, the two terms in the right hand side of  equation \eqref{eq:smoluchowski} represent respectively the rate of production of new particles of size $n=k+l$ through merging two smaller particles of sizes $k$ and $l$, and the rate of consumption  of particles of size $n$ as they themselves are being incorporated into larger particles \cite{aldous1999,fournier2009}.
The vast popularity of equation \eqref{eq:smoluchowski} for modeling, see for example, review articles \cite{leyvraz2003,aldous1999} and the references therein, is due to the flexibility one gains by specifying the kernel $\K(k,l)$ to reflect a particular physical mechanism for merging of particles. Among the most popular are the constant $\K(k, l) = 1,$  additive $\K(k, l) = k+l$ and multiplicative $\K(k, l) = kl$ kernels for $k, l \in \N$, though kernels with fractional exponents also appear \cite{aldous1999}.

In this paper, we study a more general multicomponent ({\it i.e.} multidimensional)  Smoluchowski's equation in which the particle sizes are vector-valued and the coagulation kernel is a symmetric bilinear form,
$\K(\mathbf{k}, \mathbf{l}) = \mbf{k}^T A \mbf{l}$. This kernel is a generalization of the multiplicative kernel in the monocomponent setting. 
The multicomponent coagulation equation is then written out as:
\begin{equation}\label{eq:colored_smoluchowski}
    \frac{\mathrm{d}w_{\mathbf{n}}}{\mathrm{d}t} = \frac{1}{2} \sum\limits_{\mathbf{k}+\mathbf{l} = \mathbf{n}} \K(\mathbf{k}, \mathbf{l}) w_{\mathbf{k}} w_{\mathbf{l}} - \sum\limits_{\mbf{k} \in \mathbb{N}_0^m} \K(\mathbf{n}, \mathbf{k}) w_{\mathbf{n}} w_{\mathbf{k}}, \qquad \mathbf{n} \in \mathbb{N}_0^m,
\end{equation}
with $\{w_{\mbf{n}}(0)\}$ being the initial distribution.  
This equation has appeared in many studies with different kernels, such as additive \cite{fernandez2007}, rank-one multiplicative \cite{fernandez2010}, and strictly sublinear \cite{ferreira2021multicomponent, ferreira2021, ferreira_2022, throm2023} kernels. The multiplicative multiconponent kernel first appeared in the mathematics literature in combinatorial studies \cite{andreis_2023} and \cite{kovchegov_2023}.

Although systems \eqref{eq:smoluchowski} and \eqref{eq:colored_smoluchowski} are comprised of ordinary differential equations, the fact that they contain a countably infinite number of such equations means that their behaviour (and suitable analysis techniques) are closer to those appearing in partial differential equations (PDEs).
Moreover, as discussed in the next section, one can map these systems by using a functional transform to vector-valued inviscid Burgers' PDEs in $1+d$ dimensions, which provides a fruitful avenue for analysis of the former. Although Burgers' equation  does not have known analytical solutions even in 1+1 dimension, we have shown in our previous work \cite{hoogendijk2023} that these solutions can generally be represented using a branching process from probability theory. This approach provides a probabilistic route to analysis of the multicomponent Smoluchowski equation with multiplicative kernel, as well as yielding short proofs. The goal of the current paper is therefore to expand this representation technique to multidimensional PDEs and use it for analysis of the multicomponent Smoluchowski coagulation equation with the multiplicative kernel.


The structure of this paper is as follows. After establishing the equivalence between equation \eqref{eq:colored_smoluchowski} and multidimensional inviscid Burger's equation in Theorem \ref{thm:gf_transform}, we show that the solution of the former can be  written as an expectation with respect to a certain multitype branching process in Theorem \ref{thm:multitype_branching_process}, which constitutes the core of our analysis technique. This equivalence holds until certain critical time $T_c$, which marks the breach of the mass conservation in the coagulation equation -- a phenomenon known as the \emph{gelation}. We characterise the gelation time $T_c$ by linking it to the criticality of the multitype branching process. Finally, in Corollary \ref{cor:localization}, we show that large clusters feature \emph{localization} -- their composition in terms of fractions of particles of different type concentrates on a specific vector, which we characterise using a variational problem. 
Section \ref{sec:mult_comp_smol} discusses the results and assumptions in details, while the proofs follow in Section \ref{sec:proofs}.


\section{Definitions and Results}\label{sec:mult_comp_smol}
Let $m \in \N$ and consider an $m \times m$ symmetric, irreducible matrix $A$. Let 
\begin{equation}
    \K(\mbf{k}, \mbf{l}) := \mbf{k}^T A \mbf{l}, \qquad \text{ for any } \mbf{k}, \mbf{l} \in \N_0^m
\end{equation}
be the coagulation kernel associated to $A$. We also require basic assumptions on the initial condition:
\begin{itemize}
    \item $w_{\mbf{n}}(0) \geq 0$ for all $\mbf{n} \in \N_0^m$, 
    \item $\sum_{\mbf{n} \in \N_0^m} |\mathbf{n}| w_{\mbf{n}}(0) = 1$, 
    \item $\{w_{\mbf{n}}(0)\}_{\mbf{n}\in \N_0^m}$ has non-empty, finite support.
\end{itemize}
Additionally, we require the initial conditions to be \textit{monodisperse}. This means that the initial distribution only consists of a mixture of `pure' components. That is $w_{\mbf{n}}(0)$ is non-zero only on the set of multi-indices $\mbf{n} = \mbf{e}_i$, where  $\mbf{e}_i$ is the $i$-th standard basis vector.  In this case, we define  $p_{i}:=w_{\mbf{e_i}}(0)$.


Having set the assumptions, we refer to equation \eqref{eq:colored_smoluchowski}  supplied with kernel $\K(\mbf{k}, \mbf{l})$ and initial condition $w_{\mbf{n}}(0)$ as the multicomponent Smoluchowski equation.
Next, we refer to the vector
\begin{equation}\label{eq:mass_vector}
    (\mbf{m}(t))_i := \sum\limits_{\mbf{n}\in \N_0^m} n_i w_{\mbf{n}}(t), \; i \in [m],
\end{equation}
as the \textit{total mass vector} of the multicomponent system at time $t$ and say that the equation \eqref{eq:colored_smoluchowski} exhibits gelation at time $T_c \in (0, \infty)$ such that $\mbf{m}(t) = \mbf{m}(0)$ for all $t \in [0, T_c)$ and $\mbf{m}_i(t) < \mbf{m}_i(0)$ for some $i$ and $t > T_c$.

Using the fact that the mass vector is conserved before gelation, allows us to simplify the coagulation equation \eqref{eq:colored_smoluchowski}  to obtain
\begin{equation}\label{eq:colored_smoluchowski_3}
\frac{\mathrm{d}w_{\mathbf{n}}}{\mathrm{d}t} = \frac{1}{2} \sum\limits_{\mathbf{k}+\mathbf{l} = \mathbf{n}} \K(\mathbf{k}, \mathbf{l}) w_{\mathbf{k}} w_{\mathbf{l}} - \mbf{n}^T A \mbf{m}(0) w_{\mbf{n}},  \text{ for all }\mathbf{n} \in \mathbb{N}_0^m,
\end{equation}
which we call the \textit{reduced multicomponent Smoluchowski equation}. 
\begin{proposition}\label{prop:reduced_equivalent}
The solutions of equations \eqref{eq:colored_smoluchowski} and \eqref{eq:colored_smoluchowski_3}
coincide for $t < T_c$ and both equations undergo gelation at $T_c$.
\end{proposition}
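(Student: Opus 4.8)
The plan is to show that the two equations are equivalent as long as the mass vector remains conserved, and that the mass vector remains conserved exactly up to $T_c$ for both. First I would argue the forward direction: if $\{w_{\mbf{n}}\}$ solves \eqref{eq:colored_smoluchowski} on $[0,T_c)$, then by the very definition of $T_c$ we have $\mbf{m}(t) = \mbf{m}(0)$ there, so the sink term $\sum_{\mbf{k}} \K(\mbf{n},\mbf{k}) w_{\mbf{n}} w_{\mbf{k}} = w_{\mbf{n}}\, \mbf{n}^T A \big(\sum_{\mbf{k}} \mbf{k}\, w_{\mbf{k}}\big) = \mbf{n}^T A\, \mbf{m}(t)\, w_{\mbf{n}} = \mbf{n}^T A\, \mbf{m}(0)\, w_{\mbf{n}}$, which is precisely the sink term in \eqref{eq:colored_smoluchowski_3}; hence $\{w_{\mbf{n}}\}$ solves the reduced equation on $[0,T_c)$ too. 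One subtlety to address here is the interchange of the sum with the bilinear form $\K(\mbf{n},\cdot)$, which needs $\sum_{\mbf{k}} |\mbf{k}|\, w_{\mbf{k}}(t) < \infty$; this follows from mass conservation, since $\sum_{\mbf{k}} |\mbf{k}|\, w_{\mbf{k}}(t) = |\mbf{m}(t)|_1 = |\mbf{m}(0)|_1 = 1$ componentwise bounded, and $A$ has finitely many entries.

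For the converse, suppose $\{w_{\mbf{n}}\}$ solves the reduced equation \eqref{eq:colored_smoluchowski_3}. I would first establish that its mass vector is conserved on the maximal interval of existence. Summing \eqref{eq:colored_smoluchowski_3} against $n_i$ over $\mbf{n} \in \N_0^m$ and using the change of variables $\mbf{n} = \mbf{k} + \mbf{l}$ in the gain term (so that $n_i = k_i + l_i$ and the factor $\tfrac12$ together with the symmetry of $\K$ combine the $k_i$ and $l_i$ contributions), one finds
\begin{equation*}
\frac{\mathrm d (\mbf{m}(t))_i}{\mathrm d t} = \sum_{\mbf{k},\mbf{l}} k_i\, \K(\mbf{k},\mbf{l})\, w_{\mbf{k}} w_{\mbf{l}} - \sum_{\mbf{n}} n_i\, \mbf{n}^T A\, \mbf{m}(0)\, w_{\mbf{n}} = \big(\mbf{m}(t) \text{-weighted terms}\big),
\end{equation*}
and since $\sum_{\mbf{l}} \K(\mbf{k},\mbf{l}) w_{\mbf{l}} = \mbf{k}^T A\, \mbf{m}(t)$, the right-hand side becomes $\sum_{\mbf{k}} k_i\, \mbf{k}^T A\, (\mbf{m}(t) - \mbf{m}(0))\, w_{\mbf{k}}$. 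This is a linear homogeneous relation in $\mbf{m}(t) - \mbf{m}(0)$ with locally bounded coefficients (again using finiteness of $\sum |\mbf{k}|\, w_{\mbf{k}}$, which one gets from a Gronwall-type a priori bound on the reduced equation), so $\mbf{m}(t) \equiv \mbf{m}(0)$ by uniqueness for the trivial solution of this linear ODE system. Once mass conservation for the reduced equation is in hand on its interval of existence, the sink term of \eqref{eq:colored_smoluchowski_3} equals that of \eqref{eq:colored_smoluchowski}, so $\{w_{\mbf{n}}\}$ solves \eqref{eq:colored_smoluchowski} there as well; combined with uniqueness of solutions to \eqref{eq:colored_smoluchowski} (which must be quoted or established, e.g. from the existing Smoluchowski literature under our finite-support-type assumptions), the two solutions agree for $t < T_c$, and the gelation time — defined through the mass vector — is the same for both.

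The main obstacle I anticipate is the bookkeeping of integrability and the justification of the term-by-term manipulations of the doubly infinite sums: specifically, showing that $\sum_{\mbf{k}} |\mbf{k}|\, w_{\mbf{k}}(t)$ stays finite so that $\K(\mbf{n},\cdot)$ can be applied termwise, and that differentiation under the summation sign in the mass balance is legitimate. The cleanest route is probably to first prove an a priori moment bound (the zeroth and first moments are controlled because $w \geq 0$ and the gain term only redistributes mass downward in count while conserving it in size), and then run the Gronwall/uniqueness argument for the linear ODE governing $\mbf{m}(t) - \mbf{m}(0)$. A secondary, more bureaucratic point is pinning down that $T_c$ as defined by "$\mbf{m}(t) = \mbf{m}(0)$ on $[0,T_c)$ and $\mbf{m}_i(t) < \mbf{m}_i(0)$ for some $i$ and $t > T_c$" is genuinely the same threshold for the two equations; this follows once we know the two solutions coincide wherever mass is conserved, but it should be stated carefully to avoid circularity.
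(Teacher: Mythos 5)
Your forward direction matches the paper's: before gelation of \eqref{eq:colored_smoluchowski}, $\mbf m(t)=\mbf m(0)$ lets you rewrite the sink term as $\mbf n^T A\,\mbf m(0)\,w_{\mbf n}$, so the original solution solves the reduced equation on $[0,T_c)$. Where you diverge from the paper is the converse. The paper argues by contradiction: if the reduced equation conserved mass past $T_c$, you could undo the reduction (replacing $\mbf m(0)$ by $\mbf m'(t)$) and conclude the original equation conserves mass past $T_c$ as well, which is absurd. You instead try to prove mass conservation for the reduced equation \emph{directly} by deriving a linear ODE for $\mbf m(t)-\mbf m(0)$. That is a genuinely different route, and it is a reasonable one to attempt — but as written it has a gap that is worth flagging precisely, because it touches the heart of what gelation is.

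Your mass-balance computation gives
\begin{equation*}
\frac{\dint}{\dint t}\bigl(\mbf m(t)-\mbf m(0)\bigr) \;=\; M(t)\,A\,\bigl(\mbf m(t)-\mbf m(0)\bigr),
\qquad
M_{ij}(t) \;=\; \sum_{\mbf k\in\N_0^m} k_i\,k_j\,w_{\mbf k}(t),
\end{equation*}
so the coefficient matrix $M(t)$ is the \emph{second}-moment matrix of $\{w_{\mbf k}\}$, not the first moment. You justify "locally bounded coefficients" by appealing to finiteness of $\sum_{\mbf k}|\mbf k|\,w_{\mbf k}$, but that controls the first moment, not $M(t)$. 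This is not a bookkeeping detail: for a multiplicative-type kernel the divergence of the second moment \emph{is} the gelation mechanism, so $M(t)$ blows up exactly at $T_c$. Consequently your claim that "the mass vector is conserved on the maximal interval of existence" of the reduced equation is false — the reduced system (a Flory-type modified equation) exists past $T_c$ but loses mass there. Your Gronwall/uniqueness argument only runs while $M(t)$ is finite, and identifying that threshold with $T_c$ is precisely what needs to be argued, so as stated the step is circular. To repair it you would need to either (i) characterize the reduced equation's critical time as the second-moment blowup time and then show the two blowup times agree, or (ii) adopt the paper's more economical bootstrap: suppose the reduced equation conserved mass on $[0,T_c')$ with $T_c'>T_c$, substitute $\mbf m(0)=\mbf m'(t)$ back into \eqref{eq:colored_smoluchowski_3} to recover \eqref{eq:colored_smoluchowski}, and contradict $T_c'>T_c$. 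Your remarks on termwise interchange of sums and on uniqueness of solutions to \eqref{eq:colored_smoluchowski} are valid points that the paper glosses over, and would be welcome additions either way.
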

The proposition is proved in Section \ref{sec:proof_prop_reduced_equivalent}. Our task is now to analyse of the reduced equation \eqref{eq:colored_smoluchowski_3}, which we do by mapping it to a multidimensional PDE. 
Let $\{w_{\mbf{n}}(t)\}_{\mbf{n} \in \N_0^m}$ be the solution of the multicomponent Smoluchowski's equation \eqref{eq:colored_smoluchowski_3} before gelation. Consider the multivariate generating function (GF) 
\begin{equation}\label{eq:gf_transform}
    U(t, \mathbf{x}) := \sum\limits_{\mathbf{n} \in \mathbb{N}_0^m} w_{\mathbf{n}}(t) e^{-\mathbf{x}\cdot \mathbf{n}},
\end{equation}
where $\mathbf{x} \in [0, \infty)^m$ and $t \geq 0$. Let $\mathbf{u} = -\nabla U$, where $\nabla$ is the gradient with respect to the coordinates $x_1, \ldots, x_m$. We write $\nabla \mbf{u}$ for the Jacobian of $\mbf{u}$ with respect to the spatial variables $\mbf{x}$.
Finally, observe that
$
    \mbf{m}(t) = -\nabla U(t, \mbf{0}).
$
We have the following theorem:
\begin{theorem}\label{thm:gf_transform}
    The function $\mathbf{u}(t,\mbf{x})$, as defined above using the GF transform of the solution of the multicomponent coagulation equation \eqref{eq:colored_smoluchowski_3}, solves the following PDE:
    \begin{equation}\label{eq:colored_smoluchowski_pde}
        \frac{\partial}{\partial t} \mathbf{u}(t, \mathbf{x}) = -(\nabla \mathbf{u}(t, \mathbf{x})) A(\mathbf{u}(t, \mathbf{x})-\mathbf{m}(t)), \qquad \mathbf{x} \in [0, \infty)^m
    \end{equation}
    with initial condition $\mathbf{u}(0, \mathbf{x}) = -\nabla U (0, \mathbf{x})$ for $t < T_c$, where $T_c$ is the gelation time of \eqref{eq:colored_smoluchowski_3}. Moreover, any power series solution of \eqref{eq:colored_smoluchowski_pde} of the form
    \begin{equation}\label{eq:power_series_sol}
        \mathbf{u}(t, \mbf{x}) = \sum\limits_{\mbf{n} \in \N_0^m} \mbf{n} w_{\mbf{n}} e^{-\mbf{n} \cdot \mbf{x}}
    \end{equation}
    where $\{w_{\mbf{n}}(t)\}_{\mbf{n} \in \N_0^m}$ is a collection of coefficients,
    is a solution of \eqref{eq:colored_smoluchowski_3} with initial condition $\{w_{\mbf{n}}(0)\}_{\mbf{n} \in \N_0^m}$.
\end{theorem}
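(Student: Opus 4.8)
The plan is to translate the reduced Smoluchowski equation \eqref{eq:colored_smoluchowski_3} term by term under the generating function transform, and then differentiate in the spatial variables. First I would substitute the definition $U(t,\mbf x) = \sum_{\mbf n} w_{\mbf n}(t) e^{-\mbf x\cdot\mbf n}$ into the time derivative, obtaining $\partial_t U = \sum_{\mbf n} \dot w_{\mbf n} e^{-\mbf x\cdot\mbf n}$, and replace $\dot w_{\mbf n}$ by the right-hand side of \eqref{eq:colored_smoluchowski_3}. The convolution (gain) term, upon multiplying by $e^{-\mbf x\cdot\mbf n}$ and summing over $\mbf n$, factorizes: using $\K(\mbf k,\mbf l) = \mbf k^T A\mbf l$ and $e^{-\mbf x\cdot(\mbf k+\mbf l)} = e^{-\mbf x\cdot\mbf k}e^{-\mbf x\cdot\mbf l}$, the double sum becomes $\tfrac12 \sum_{\mbf k,\mbf l} (\mbf k^T A\mbf l)\, w_{\mbf k} w_{\mbf l} e^{-\mbf x\cdot\mbf k} e^{-\mbf x\cdot\mbf l}$, which is exactly $\tfrac12\, (\nabla U)^T A (\nabla U) = \tfrac12\, \mbf u^T A\mbf u$ since $-\partial_{x_i} U = \sum_{\mbf n} n_i w_{\mbf n} e^{-\mbf x\cdot\mbf n} = u_i$. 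The loss term $\mbf n^T A\mbf m(0)\, w_{\mbf n}$, after the same operations, yields $\sum_{\mbf n} (\mbf n^T A\mbf m(0))\, w_{\mbf n} e^{-\mbf x\cdot\mbf n} = -(\nabla U)^T A\mbf m(0) = \mbf u^T A\mbf m(0)$ (noting $\mbf m(0)=\mbf m(t)$ before gelation). Hence $\partial_t U = \tfrac12 \mbf u^T A\mbf u - \mbf u^T A\mbf m(0)$.

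Next I would apply $-\nabla$ to both sides to pass from $U$ to $\mbf u = -\nabla U$. On the left this gives $\partial_t \mbf u$. On the right, differentiating $\tfrac12 \mbf u^T A\mbf u$ with respect to $x_j$ gives $\sum_{i,k} (\partial_{x_j} u_i) A_{ik} u_k = (\nabla\mbf u)^T A\mbf u$ evaluated componentwise (using the symmetry of $A$ to combine the two terms from the product rule), so $-\nabla\bigl(\tfrac12 \mbf u^T A\mbf u\bigr) = -(\nabla\mbf u) A\mbf u$ with the Jacobian convention of the paper; similarly $-\nabla(\mbf u^T A\mbf m(0)) = -(\nabla\mbf u) A\mbf m(0)$. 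Combining, $\partial_t\mbf u = -(\nabla\mbf u)A(\mbf u - \mbf m(t))$, which is \eqref{eq:colored_smoluchowski_pde}. The initial condition $\mbf u(0,\mbf x) = -\nabla U(0,\mbf x)$ is immediate. For the converse direction, I would take a power series of the form \eqref{eq:power_series_sol}, plug it into \eqref{eq:colored_smoluchowski_pde}, and reverse the computation: matching coefficients of $e^{-\mbf n\cdot\mbf x}$ on both sides recovers \eqref{eq:colored_smoluchowski_3} for each $\mbf n$, since the exponentials $\{e^{-\mbf n\cdot\mbf x}\}_{\mbf n\in\N_0^m}$ are linearly independent as functions on $[0,\infty)^m$.

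The main obstacle is not the algebra but the analytic bookkeeping: one must justify interchanging differentiation with the infinite sum defining $U$, and rearranging the double series in the gain term (Fubini/absolute convergence). Before gelation the mass is finite, $\sum_{\mbf n} |\mbf n| w_{\mbf n}(t) = |\mbf m(0)| < \infty$, and the finite, monodisperse initial support together with the coagulation dynamics keeps the relevant moments controlled; for $\mbf x$ in the interior (or with $\Re x_i > 0$) the exponential weights $e^{-\mbf x\cdot\mbf n}$ decay fast enough that $U(t,\mbf x)$ is real-analytic in $\mbf x$ and all the manipulations above are legitimate. I would state this convergence as a lemma (or cite the corresponding estimate from \cite{hoogendijk2023}) and then carry out the term-by-term computation on the domain where it is valid, extending to the boundary $x_i = 0$ by continuity where needed (e.g. for the identity $\mbf m(t) = -\nabla U(t,\mbf 0)$).
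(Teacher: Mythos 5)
Your proposal is correct and follows essentially the same route as the paper's proof: derive the first-order PDE $\partial_t U = \tfrac12 (\nabla U)^T A(\nabla U) + (\nabla U)^T A\mbf{m}(0)$ from the generating-function transform of \eqref{eq:colored_smoluchowski_3}, apply $-\nabla$ and use the symmetry of $A$ to obtain \eqref{eq:colored_smoluchowski_pde}, and establish the converse by substituting the power series \eqref{eq:power_series_sol} and matching coefficients of the linearly independent exponentials $e^{-\mbf n\cdot\mbf x}$. The only (minor) addition is that you flag the Fubini/term-by-term differentiation justification explicitly, which the paper dispatches with a one-line appeal to uniform convergence.
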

This theorem is proven in Section \ref{sec:proof_thm_gf_transform}. 
The theorem states that  if we manage to find a solution of the PDE \eqref{eq:colored_smoluchowski_pde} that has a power series expansion of the form \eqref{eq:power_series_sol}, we can transform it into a solution of the multicomponent Smoluchowski equation \eqref{eq:colored_smoluchowski_3}.
We will now show that such a power series solution is given by the \emph{overall total progeny} distribution of a certain \emph{multi-type branching process} determined by the initial conditions and the kernel matrix $A$.


First, we give the working definition of the branching process, as appears, for example, in \cite{harris_1963}. Let $X_{n,i}$ be independent and identically distributed copies of a non-negative discrete random variable $X$, indexed by $n,i\in \mathbb N_0$. We refer to the probability mass function $\mathbb P[X=k]$ as the offspring distribution. Then the recurrence equation 
$$Z_{n+1} = \sum_{i=1}^{Z_n} X_{n,i}$$
with $Z_0 = 1$ is called a \emph{simple branching process}, and the random variable
$T:=\sum_{n=0}^{\infty} Z_n$ -- its \emph{total progeny}.
A more general \emph{multi-type branching process} is defined in a similar fashion, by replacing the random variable $X$ with a collection of random vectors indexed with $k\in [m]:={1,2,\dots,m}$.
For fixed type $k$, let $\mathbf{X}_{k,(n,i)}$ be a collection of independent and identically distributed copies of a non-negative discrete random vector $\mathbf{X}_{k}$. The \emph{multi-type branching process started from type $i$} is the recurrence equation 
\begin{equation*}
    \mathbf{Z}_{n+1} = \sum\limits_{k=1}^m\sum\limits_{i=1}^{Z_{n, k}}\mathbf{X}_{k,(n, i)},
\end{equation*}
initialized with the standard unit vector, $\mathbf{Z}_0 = \mathbf{e}_i$, where $\mathbf{Z}_n = (Z_{n, 1}, Z_{n, 2}, \ldots,Z_{n, m})$. Then, the random variable $\mathbf{T}^{(i)} := \sum_{n=0}^\infty \mathbf{Z}_n$ is called the \emph{total progeny of the multi-type branching process started from type $i$}. By choosing the starting type $k$ at random with probability $p_k$, we define the \emph{overall total progeny} as $\mathbf{T} :=\sum_{k = 1}^m p_k \mathbf{T}^{(k)}$. 

In general, $\mathbf{T}$ is not a proper random variable, in the sense that it may take value $\infty$ with a positive probability. We then define  the extinction probability $\xi = \P(|\mathbf{T}|<\infty)$. When $\xi=1$ we say that  the branching process goes extinct almost surely (a.s.). Consider a matrix with elements $M_{ij} = \E[X_{i, j}]$.
 It can be shown \cite{harris_1963, athreya_1972} that the branching process goes extinct a.s. if and only if $\|M\|_2 \leq 1$. When $\|M\|_2 = 1$, the branching process is said to be \textit{critical}.



Now we are ready to define the specific multi-type branching process that will be used in the rest of the paper. Let the root have type $k \in [m]$ with probability $p_{k}$. We define the offspring vector $\mbf{X}_k$ by giving its probability generating function 
\begin{equation}
    G_{\mbf{X}_k}(\mbf{s}) = \prod\limits_{l=1}^m \exp(t A_{kl} p_l (s_l-1)),
\end{equation} 
incorporating the kernel matrix $A$ and the initial conditions.  Note that by varying the time parameter $t$, we can move between extinction and non-extinction of the branching process.
We define $T_c$ to be the time for which the branching process is critical. The overall total progeny of this branching process is denoted by $\mbf{T}$. The  total progeny conditioned on starting from type $k \in [m]$ is denoted by $\mbf{T}^{(k)}$. 
\begin{theorem}\label{thm:multitype_branching_process}
    Consider the multi-type branching process as defined above. The PDE \eqref{eq:colored_smoluchowski_pde} is solved by $\mbf{u}(t, \mbf{x})$ where $u_i(t, \mbf{x})$ are defined as
    \begin{equation}
        u_i(t, \mbf{x}) := p_i G_{\mbf{T}^{(i)}}(e^{-x_1}, \ldots, e^{-x_m})
    \end{equation}
    for each $i \in [m]$. This solution is smooth up until the critical time $T_c= \|A P\|_2^{-1}$, where $P = \mathrm{diag}(p_1, \ldots, p_m)$. 
\end{theorem}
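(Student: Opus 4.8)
The plan is to exploit the recursive (fixed--point) structure of the total progeny generating functions and then pass to the PDE by implicit differentiation; since the claim involves derivatives of $\mbf{u}$, one must first secure enough regularity. \textbf{Step 1 (functional equation).} Splitting the genealogical tree rooted at a type--$i$ individual into the root --- which contributes $\mbf{e}_i$ to $\mbf{T}^{(i)}$ --- and the independent subtrees rooted at its offspring gives, for $\mbf{s}\in[0,1]^m$,
\begin{equation*}
    G_{\mbf{T}^{(i)}}(\mbf{s}) \;=\; s_i\, G_{\mbf{X}_i}\!\bigl(G_{\mbf{T}^{(1)}}(\mbf{s}), \dots, G_{\mbf{T}^{(m)}}(\mbf{s})\bigr), \qquad i \in [m].
\end{equation*}
Inserting the Poisson form $G_{\mbf{X}_i}(\mbf{s}) = \prod_{l} \exp\bigl(t A_{il} p_l (s_l-1)\bigr)$, substituting $s_l = e^{-x_l}$, multiplying by $p_i$, and writing $u_i := p_i G_{\mbf{T}^{(i)}}(e^{-x_1},\dots,e^{-x_m})$ turns this into the implicit relation
\begin{equation}\label{eq:bp_implicit}
    u_i(t,\mbf{x}) \;=\; p_i\, e^{-x_i}\, \exp\!\Bigl( t\,\bigl( A(\mbf{u}(t,\mbf{x}) - \mbf{m}(0))\bigr)_i \Bigr), \qquad i\in[m],
\end{equation}
where the monodisperse assumption gives $\mbf{m}(0) = (p_1,\dots,p_m)$. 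Evaluating \eqref{eq:bp_implicit} at $\mbf{x}=\mbf{0}$ and using that the process is extinct a.s.\ below criticality (so $G_{\mbf{T}^{(i)}}(\mbf{1})=1$) shows $\mbf{u}(t,\mbf{0})=\mbf{m}(0)$, i.e.\ the mass vector of this candidate is constant, $\mbf{m}(t)=\mbf{m}(0)$; at $t=0$ the relation reduces to $u_i(0,\mbf{x}) = p_i e^{-x_i} = -\partial_{x_i}U(0,\mbf{x})$, matching the prescribed initial condition.

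\textbf{Step 2 (smoothness via the implicit function theorem).} Consider $\Phi_i(t,\mbf{x},\mbf{u}) := u_i - p_i e^{-x_i}\exp\bigl(t(A(\mbf{u}-\mbf{m}(0)))_i\bigr)$, which is real--analytic in $(t,\mbf{x},\mbf{u})$, and whose differential in $\mbf{u}$ has entries $\partial_{u_j}\Phi_i = \delta_{ij} - t u_i A_{ij} = (I - tDA)_{ij}$ with $D := \mathrm{diag}(u_1,\dots,u_m)$. Since a generating function on $[0,1]^m$ takes values in $[0,1]$, one has $0\le u_i\le p_i$ on $[0,T_c)\times[0,\infty)^m$, so the non--negative matrix $tDA$ (here $A$ has non--negative entries, as required for the offspring law to be well defined) is dominated entrywise by $tPA$; monotonicity of the Perron root then gives $\rho(tDA)\le \rho(tPA)=\rho(tAP)\le \|tAP\|_2 = t\|AP\|_2 < 1$ for every $t<T_c=\|AP\|_2^{-1}$ and every $\mbf{x}$. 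Hence $I-tDA$ is invertible throughout that region, and the implicit function theorem --- combined with the fact that subcriticality makes all moments of $\mbf{T}^{(i)}$ finite, so each $G_{\mbf{T}^{(i)}}$ extends smoothly across $\mbf{s}=\mbf{1}$ --- upgrades the continuous branching--process solution of $\Phi=0$ to a $C^\infty$ one on $[0,T_c)\times[0,\infty)^m$.

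\textbf{Step 3 (verifying the PDE).} Taking logarithms in \eqref{eq:bp_implicit} and differentiating in $x_j$ yields, in matrix form, $\nabla\mbf{u} = -D + tDA\,\nabla\mbf{u}$, hence $\nabla\mbf{u} = -(I-tDA)^{-1}D$; differentiating in $t$ and using $\tfrac{d}{dt}\mbf{m}(t)=0$ yields $\partial_t\mbf{u} = DA(\mbf{u}-\mbf{m}(t)) + tDA\,\partial_t\mbf{u}$, hence $\partial_t\mbf{u} = (I-tDA)^{-1}DA(\mbf{u}-\mbf{m}(t))$. Combining the two identities gives $\partial_t\mbf{u} = -(\nabla\mbf{u})\,A\,(\mbf{u}-\mbf{m}(t))$, which is exactly \eqref{eq:colored_smoluchowski_pde}. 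The critical time is $T_c=\|AP\|_2^{-1}$ because the mean matrix of the branching process is $M=tAP$, so criticality ($\|M\|_2=1$) occurs precisely at $t=\|AP\|_2^{-1}$; this is also the first time the Jacobian $I-tDA$ degenerates --- at $\mbf{x}=\mbf{0}$, where $D=P$ --- so regularity genuinely cannot be pushed past it.

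The step I expect to be the main obstacle is the boundary behaviour of the generating functions in Step 2: one must show $G_{\mbf{T}^{(i)}}$ is smooth not only on $[0,1)^m$ but across the corner $\mbf{s}=\mbf{1}$ (equivalently, that $\mbf{u}$ is smooth up to $\mbf{x}=\mbf{0}$), which rests on finiteness of all moments of the total progeny in the subcritical regime, and then on applying the implicit function theorem uniformly in $\mbf{x}\in[0,\infty)^m$ through the Perron--Frobenius monotonicity estimate rather than a crude operator--norm bound. Once invertibility of $I-tDA$ is in hand, Steps 1 and 3 are routine.
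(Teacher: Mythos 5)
Your proof is correct and follows essentially the same route as the paper: the fixed-point system $G_{\mbf{T}^{(i)}} = s_i G_{\mbf{X}_i}(G_{\mbf{T}^{(1)}},\dots,G_{\mbf{T}^{(m)}})$, implicit differentiation in $t$ and $\mbf{x}$, and invertibility of $I-tDA$ up to criticality via Perron--Frobenius. The only differences are cosmetic --- you substitute the Poisson form and differentiate logarithmically where the paper keeps $G_{\mbf{X}_k}$ abstract and invokes an implicit-function-theorem lemma --- and your spectral-radius monotonicity bound $\rho(tDA)\le\rho(tAP)$ is, if anything, a slightly cleaner justification of uniform invertibility in $\mbf{x}$ than the paper's.
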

The proof is given in Section \ref{sec:proof_thm_multitype_branching_process}.
\begin{corollary}\label{cor:gelation_time}
    The gelation time of the multicomponent Smoluchowski equation \eqref{eq:colored_smoluchowski_3} is given by the critical time $T_c$ of the multi-type branching process.
\end{corollary}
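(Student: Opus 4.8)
The plan is to compute the total mass vector $\mathbf m(t)$ of the solution of \eqref{eq:colored_smoluchowski_3} directly from the branching‑process representation and see exactly when it drops below $\mathbf m(0)$. Recall from the paragraph preceding Theorem~\ref{thm:gf_transform} that $\mathbf m(t)=-\nabla U(t,\mathbf 0)=\mathbf u(t,\mathbf 0)$, and that Theorem~\ref{thm:multitype_branching_process} gives $u_i(t,\mathbf x)=p_i G_{\mathbf T^{(i)}}(e^{-x_1},\dots,e^{-x_m})$, whose power‑series coefficients (which by Theorem~\ref{thm:gf_transform} are the $n_i w_{\mathbf n}(t)$) are nonnegative. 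Letting $\mathbf x\downarrow\mathbf 0$ by monotone convergence (equivalently, Abel's theorem at $\mathbf s=\mathbf 1$) then yields
\begin{equation*}
    m_i(t)\;=\;\sum_{\mathbf n\in\N_0^m} n_i w_{\mathbf n}(t)\;=\;p_i\,G_{\mathbf T^{(i)}}(\mathbf 1)\;=\;p_i\,\P\big(|\mathbf T^{(i)}|<\infty\big),
\end{equation*}
so the mass held by component $i$ is $p_i$ times the probability that the branching process started from type $i$ is finite. Since the monodisperse data give $m_i(0)=p_i$, the corollary reduces to showing that $\P(|\mathbf T^{(i)}|<\infty)=1$ for all $i$ when $t<T_c$, and that $\P(|\mathbf T^{(k)}|<\infty)<1$ for some $k$ with $p_k>0$ when $t>T_c$.

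For this I would use the offspring means. Differentiating the generating functions, $\E[(\mathbf X_k)_l]=\partial_{s_l}G_{\mathbf X_k}(\mathbf 1)=tA_{kl}p_l$, so the mean matrix is $M(t)=tAP$ with $P=\mathrm{diag}(p_1,\dots,p_m)$ (we may assume all $p_i>0$, discarding components absent from the initial data). By the extinction criterion recalled earlier, the process goes extinct a.s. precisely when $\|M(t)\|_2\le 1$, i.e. for $t\le\|AP\|_2^{-1}=T_c$, which is exactly the critical time of Theorem~\ref{thm:multitype_branching_process}. Hence for $t<T_c$ every $\P(|\mathbf T^{(i)}|<\infty)=1$, so $\mathbf m(t)=\mathbf m(0)$ and no gelation occurs on $[0,T_c)$. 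For $t>T_c$ the process is not a.s. extinct, so its overall extinction probability satisfies $\xi(t)=\sum_k p_k\,\P(|\mathbf T^{(k)}|<\infty)<1=\sum_k p_k$; this forces $\P(|\mathbf T^{(k)}|<\infty)<1$ for at least one $k$ with $p_k>0$, and for that $k$ we get $m_k(t)=p_k\,\P(|\mathbf T^{(k)}|<\infty)<p_k=m_k(0)$. By the definition of the gelation time this identifies $T_c$ as the gelation time of \eqref{eq:colored_smoluchowski_3} (and, via Proposition~\ref{prop:reduced_equivalent}, of \eqref{eq:colored_smoluchowski}).

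The step I expect to be the main obstacle is legitimising the identity $m_i(t)=p_i G_{\mathbf T^{(i)}}(\mathbf 1)$ for $t$ beyond $T_c$: Theorem~\ref{thm:multitype_branching_process} only asserts that $u_i(t,\mathbf x)=p_i G_{\mathbf T^{(i)}}(e^{-\mathbf x})$ solves the PDE \eqref{eq:colored_smoluchowski_pde} up to the critical time, so one must check separately that the coefficients read off from its power series still solve the reduced equation \eqref{eq:colored_smoluchowski_3}, and that $\sum_{\mathbf n} n_i w_{\mathbf n}(t)$ still sums to $p_i G_{\mathbf T^{(i)}}(\mathbf 1)$, for $t>T_c$. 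I would resolve this by exploiting the structure of \eqref{eq:colored_smoluchowski_3}: unlike \eqref{eq:colored_smoluchowski}, its coefficients are frozen at the constant vector $\mathbf m(0)$, and the right‑hand side of the equation for $w_{\mathbf n}$ involves only $\{w_{\mathbf k}:\mathbf k\le\mathbf n\}$, so the system is triangular with respect to the partial order on $\N_0^m$ and has a unique coefficient‑wise solution for all $t\ge0$. The computations in the proof of Theorem~\ref{thm:multitype_branching_process} that show the progeny distributions furnish such a solution are coefficient‑wise identities that never use subcriticality — only absolute convergence of the generating series on the open positive cone, which persists for all $t$ because $G_{\mathbf T^{(i)}}\le1$ there. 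Uniqueness then makes the global solution of \eqref{eq:colored_smoluchowski_3} coincide with the branching‑process formula, and a final passage $\mathbf x\downarrow\mathbf 0$ by monotone convergence delivers the mass loss for $t>T_c$ used above.
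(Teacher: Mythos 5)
Your proof is correct in substance but follows a genuinely different route from the paper's. The paper's argument is a two-line identification of formulas: Theorem \ref{thm:multitype_branching_process} already produces $T_c=\|AP\|_2^{-1}$ as the time at which the generating-function solution loses smoothness (the Jacobian of the implicit system \eqref{eq:implicit_system} becomes singular), and the corollary is then proved by computing the mean offspring matrix $tAP$ and observing that the criticality condition $t\|AP\|_2=1$ selects the same value of $t$. You instead verify the mass-loss \emph{definition} of gelation directly: $m_i(t)=u_i(t,\mathbf{0})=p_i\,\P(|\mathbf{T}^{(i)}|<\infty)$, so mass is conserved exactly as long as the process is a.s.\ extinct, i.e.\ for $t\le\|AP\|_2^{-1}$, and some component strictly loses mass for supercritical $t$. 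This buys a cleaner logical link to the definition of gelation (which the paper's proof leaves implicit when it reads the phrase ``gelation time'' off the loss of smoothness in Theorem \ref{thm:multitype_branching_process}) and a probabilistic interpretation of the gel fraction as a survival probability; the price is that you must extend the representation $w_{\mathbf{n}}(t)=\frac{p_i}{n_i}\P(\mathbf{T}^{(i)}=\mathbf{n})$ past $T_c$, where Theorems \ref{thm:gf_transform} and \ref{thm:multitype_branching_process} no longer apply as stated. Your handling of that step (uniqueness for the triangular hierarchy \eqref{eq:colored_smoluchowski_3}, plus the claim that the coefficient identities ``never use subcriticality'') is the right idea but stated loosely; the cleanest way to close it is to note that for fixed $\mathbf{n}$ the quantity $\P(\mathbf{T}^{(i)}=\mathbf{n})$ is an entire function of $t$ (explicitly computable via the Lagrange--Good formula of Lemma \ref{lem:lagrange_inversion_cor}), so the coefficient-wise ODEs, valid for $t<T_c$, persist for all $t\ge 0$ by analytic continuation and therefore coincide with the unique solution of the hierarchy. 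With that patch your argument is complete.
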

\begin{corollary}\label{cor:sol_multicomponent_smolochukowski}
    For any $\mbf{n} \in \N^m_0$, fix an arbitrary $i \in [m]$ such that $n_i>0$. The solution of the multicomponent Smoluchowski equation 
    \eqref{eq:colored_smoluchowski_3} at $\mbf{n}$ is given by
    \begin{equation}
        w_{\mbf{n}}(t) = \frac{p_i}{n_i} \P(\mbf{T}^{(i)} = \mbf{n})
    \end{equation}
    for $t \in (0, T_c).$ 
\end{corollary}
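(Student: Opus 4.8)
The plan is to combine Theorem \ref{thm:gf_transform} with Theorem \ref{thm:multitype_branching_process} and then read off the coefficients. By Theorem \ref{thm:multitype_branching_process}, the PDE \eqref{eq:colored_smoluchowski_pde} is solved on $(0,T_c)$ by $\mbf{u}(t,\mbf{x})$ with $u_i(t,\mbf{x}) = p_i G_{\mbf{T}^{(i)}}(e^{-x_1},\dots,e^{-x_m})$. First I would expand this probability generating function as a power series in the variables $s_l = e^{-x_l}$: writing $G_{\mbf{T}^{(i)}}(\mbf{s}) = \sum_{\mbf{n}\in\N_0^m} \P(\mbf{T}^{(i)}=\mbf{n})\, \mbf{s}^{\mbf{n}}$, where $\mbf{s}^{\mbf{n}} = \prod_l s_l^{n_l}$, gives
\begin{equation*}
    u_i(t,\mbf{x}) = \sum_{\mbf{n}\in\N_0^m} p_i\, \P(\mbf{T}^{(i)}=\mbf{n})\, e^{-\mbf{n}\cdot\mbf{x}}.
\end{equation*}
On the other hand, by the second part of Theorem \ref{thm:gf_transform}, any solution of the PDE admitting a power series representation of the form \eqref{eq:power_series_sol}, namely $u_i(t,\mbf{x}) = \sum_{\mbf{n}} n_i w_{\mbf{n}}(t) e^{-\mbf{n}\cdot\mbf{x}}$, has coefficients $\{w_{\mbf{n}}(t)\}$ solving the reduced multicomponent Smoluchowski equation \eqref{eq:colored_smoluchowski_3} with the prescribed initial data. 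Matching the two expansions coordinatewise in $i$ and in $\mbf{n}$ yields $n_i w_{\mbf{n}}(t) = p_i \P(\mbf{T}^{(i)}=\mbf{n})$, and dividing by $n_i$ (legitimate precisely when $n_i>0$, which is the hypothesis) gives the claimed formula.

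Two points need care. First, I must check that $u_i(t,\mbf{x})$ as given by the branching process genuinely has the structural form \eqref{eq:power_series_sol}, i.e. that the coefficient of $e^{-\mbf{n}\cdot\mbf{x}}$ is of the shape $n_i w_{\mbf{n}}$ for a single collection $\{w_{\mbf{n}}\}$ that is independent of the coordinate $i$. This is the consistency statement: for any $\mbf{n}$ and any two indices $i,j$ with $n_i,n_j>0$ we need $\tfrac{p_i}{n_i}\P(\mbf{T}^{(i)}=\mbf{n}) = \tfrac{p_j}{n_j}\P(\mbf{T}^{(j)}=\mbf{n})$; this follows because the pair $\mbf{u}$ produced by Theorem \ref{thm:multitype_branching_process} solves the PDE, Theorem \ref{thm:gf_transform} forces the coefficients to satisfy \eqref{eq:colored_smoluchowski_3}, and the latter has a unique solution for the given monodisperse initial data (uniqueness holding on $(0,T_c)$ since the finitely supported initial data propagate to coefficients determined recursively by the ODE system). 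Thus the definition of $w_{\mbf{n}}(t)$ via $\tfrac{p_i}{n_i}\P(\mbf{T}^{(i)}=\mbf{n})$ does not depend on the choice of $i$, and $\{w_{\mbf{n}}(t)\}$ is exactly the solution of the reduced equation. Second, I should verify the initial condition matches: at $t=0$ the offspring pgf is $G_{\mbf{X}_k}(\mbf{s}) = 1$, so $\mbf{T}^{(i)} = \mbf{e}_i$ almost surely, giving $w_{\mbf{n}}(0) = \tfrac{p_i}{n_i}\P(\mbf{e}_i = \mbf{n}) = p_i \one_{\{\mbf{n}=\mbf{e}_i\}}$, which is the monodisperse initial data by the definition $p_i := w_{\mbf{e}_i}(0)$.

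The main obstacle is the uniqueness/consistency argument: one must argue that the power series coefficients extracted from the branching-process solution are the \emph{only} possible ones, so that identifying them with $w_{\mbf{n}}(t)$ is unambiguous and they indeed coincide with the solution of \eqref{eq:colored_smoluchowski_3}. I expect this to reduce to the observation that, with finitely supported monodisperse initial data, the reduced Smoluchowski system \eqref{eq:colored_smoluchowski_3} is a locally Lipschitz ODE system whose solution components $w_{\mbf{n}}(t)$ are determined layer by layer in $|\mbf{n}|$, hence unique on $[0,T_c)$; combined with Theorem \ref{thm:gf_transform} mapping PDE solutions of the form \eqref{eq:power_series_sol} to solutions of \eqref{eq:colored_smoluchowski_3}, and Theorem \ref{thm:multitype_branching_process} supplying one such PDE solution, the identification is forced. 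Everything else is bookkeeping with generating functions.
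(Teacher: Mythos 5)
Your proof takes the same route as the paper, whose entire argument is a single sentence: combine Theorem \ref{thm:multitype_branching_process} with the second part of Theorem \ref{thm:gf_transform} and match coefficients. You unpack this carefully and, to your credit, flag the consistency requirement---that $\tfrac{p_i}{n_i}\P(\mbf{T}^{(i)}=\mbf{n})$ be independent of the choice of $i$ with $n_i>0$---which is needed for $P\mbf{G}$ to actually have the structural form \eqref{eq:power_series_sol}; the paper passes over this point. However, your justification of consistency is circular: you invoke the second part of Theorem \ref{thm:gf_transform} to conclude the coefficients solve \eqref{eq:colored_smoluchowski_3}, but that part of the theorem is only applicable once the form \eqref{eq:power_series_sol} has already been established, which is exactly the claim at issue. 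A non-circular repair is to start from the unique ODE solution $\{w_{\mbf{n}}(t)\}$, use the \emph{first} part of Theorem \ref{thm:gf_transform} to see that its GF transform solves the PDE with the same Cauchy data as the branching-process solution $P\mbf{G}$, and then identify the two real-analytic solutions by computing their $t$-Taylor coefficients recursively from the PDE; alternatively, the consistency identity can be read off directly from the Lagrange--Good representation of Lemma \ref{lem:lagrange_inversion_cor}.
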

The corollary above can be combined with an arborescent form of Lagrange's inversion formula (see \cite{bender_1998}) to obtain a combinatorial formula for $w_{\mbf{n}}(t)$ that is similar to the results derived in \cite{kovchegov_2023, andreis_2023}. 

Finally, we discuss the concentration-like behaviour in the cluster size distribution $w_{\mbf{n}}(t)$ observed when $N:=|\mathbf{n}|$ is large.  In the multicomponent coagulation equation, each cluster is characterised by a vector $\bf n$, and we can interpret $N$ as the overall cluster size, irrespective of its composition.
Since large clusters can only be generated by a large number of merging events, the mass in the distribution should concentrate on the `mean' vector of all possible combinations if we condition $N$ to be large. This phenomenon,  called \textit{localization},  was shown to take place in other multicomponent coagulation equations that do not lead to gelation \cite{ferreira_2022,ferreira2021}, where the localization was characterised for asymptotically large time to be defined by the initial distribution.  We show that for the multiplicative kernel, localization generally depends on time, as well as the initial distribution and the kernel matrix. 

Let $P := \mathrm{diag}(p_1, \ldots, p_m)$, $\Delta_m:= \{\boldsymbol{\rho} \in \R^m : |\boldsymbol{\rho}| = 1\}$ and consider a function $\Gamma: \Delta_m \to \R$, 
\begin{equation}
    \Gamma(\rho_1, \ldots, \rho_m) = \sum\limits_{l=1}^m \left(\rho_l \log\left(\frac{\rho_l}{t(AP\boldsymbol{\rho})_l}\right) + t (AP\boldsymbol{\rho})_l\right) - 1.
\end{equation}
\begin{corollary}\label{cor:localization}
    Consider the multicomponent Smoluchowski equation \eqref{eq:colored_smoluchowski_3} with monodisperse initial conditions $p_i \neq 0$ for $i \in [m]$. For any $t \in (0, T_c)$ and stochastic vector $\boldsymbol{\rho}\in \Q^m$ with strictly positive elements,
    \begin{equation}
        \lim\limits_{N\to\infty}\frac{1}{N} \log w_{\bf n}(t)\Big|_{{\bf n}=N \boldsymbol{\rho}} = -\Gamma(\boldsymbol{\rho}).
    \end{equation}
    Moreover,  $\Gamma(\boldsymbol{\rho})$ is convex on $\Delta_m$.
\end{corollary}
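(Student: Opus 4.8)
The plan is to combine the probabilistic representation from Corollary \ref{cor:sol_multicomponent_smolochukowski} with a large deviations estimate for the total progeny of the multi-type branching process. By that corollary, for $\mathbf{n} = N\boldsymbol{\rho}$ and any $i$ with $\rho_i > 0$ we have $w_{\mathbf{n}}(t) = \frac{p_i}{n_i}\P(\mathbf{T}^{(i)} = N\boldsymbol{\rho})$, so that $\frac1N \log w_{\mathbf{n}}(t) = \frac1N\log\P(\mathbf{T}^{(i)} = N\boldsymbol{\rho}) + o(1)$, since the prefactor $p_i/(N\rho_i)$ contributes only $O(\frac{\log N}{N})$. Thus everything reduces to showing $\frac1N\log\P(\mathbf{T}^{(i)} = N\boldsymbol{\rho}) \to -\Gamma(\boldsymbol{\rho})$.

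To evaluate this, I would use the classical combinatorial description of the total progeny of a multi-type branching process: conditioned on the type-decomposition of the tree, $\P(\mathbf{T}^{(i)} = \mathbf{n})$ can be written via a multi-type analogue of the Otter--Dwass / Lagrange inversion formula (the arborescent form alluded to just after Corollary \ref{cor:sol_multicomponent_smolochukowski}). Concretely, since each $\mathbf{X}_k$ is a vector of independent Poisson random variables with means $tA_{kl}p_l$, the generating function $G_{\mathbf{T}^{(i)}}$ is determined by the fixed-point equation coming from the PDE, and $\P(\mathbf{T}^{(i)} = \mathbf{n})$ equals $\frac{n_i}{|\mathbf{n}|}$ times a coefficient of a product of Poisson weights; extracting the coefficient gives an explicit expression of the form $\frac{n_i}{|\mathbf{n}|}\prod_l \frac{(t(A\mathbf{m})_l)^{n_l}}{n_l!} e^{-\dots}$ evaluated along the tree-counting identity, up to a polynomial-in-$N$ factor. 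Applying Stirling's formula $\log n_l! = n_l\log n_l - n_l + O(\log n_l)$ with $n_l = N\rho_l$ then yields $\frac1N\log\P(\mathbf{T}^{(i)} = N\boldsymbol{\rho}) \to -\sum_l\big(\rho_l\log\rho_l - \rho_l - \rho_l\log(t(AP\boldsymbol{\rho})_l)\big) - t\sum_l(AP\boldsymbol{\rho})_l + \text{(linear terms)}$, which after collecting terms is exactly $-\Gamma(\boldsymbol{\rho})$; the constant $-1$ and the $+t(AP\boldsymbol{\rho})_l$ terms come from the normalization $|\boldsymbol{\rho}|=1$ and the Poisson exponential factors respectively. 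The restriction to rational $\boldsymbol{\rho}$ with positive entries is what makes $N\boldsymbol{\rho} \in \N_0^m$ along a subsequence, so the limit is taken through that subsequence.

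For the convexity claim, I would argue directly on the formula for $\Gamma$. Write $\Gamma(\boldsymbol{\rho}) = \sum_l \rho_l\log\rho_l - \sum_l \rho_l\log(t(AP\boldsymbol{\rho})_l) + t\,\mathbf{1}^T AP\boldsymbol{\rho} - 1$. The first term is the (convex) negative entropy, the third term is linear in $\boldsymbol{\rho}$, and the constant is harmless; the only nontrivial piece is $-\sum_l \rho_l\log(AP\boldsymbol{\rho})_l$ (the $\log t$ part is again linear on $\Delta_m$). This is a function of the form $-\sum_l \rho_l \log(B\boldsymbol{\rho})_l$ with $B = AP$ having nonnegative entries (after noting $A$ is, up to the irreducibility hypothesis, entrywise nonnegative in the relevant regime), and such "relative-entropy-like" expressions are known to be convex — indeed $-\sum_l a_l\log b_l$ is jointly convex in $(a,b)$ on the positive orthant, and here both $a = \boldsymbol{\rho}$ and $b = B\boldsymbol{\rho}$ are linear images of $\boldsymbol{\rho}$, so the composition with the linear map $\boldsymbol{\rho}\mapsto(\boldsymbol{\rho}, B\boldsymbol{\rho})$ preserves convexity. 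Adding the convex negative-entropy term and linear terms preserves convexity, giving the claim.

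The main obstacle I expect is making the large-deviations step rigorous: one needs a clean, quantitative form of the multi-type tree-counting identity that isolates the exponential rate from sub-exponential combinatorial factors, and one must check that those factors really are $e^{o(N)}$ uniformly enough to take logarithms and divide by $N$. A secondary subtlety is handling the boundary/degenerate cases — ensuring $(AP\boldsymbol{\rho})_l > 0$ for all $l$ (which uses irreducibility of $A$ together with $p_i>0$ and $\rho_l>0$) so that all logarithms are well-defined, and confirming that restricting to $t<T_c$ guarantees the relevant series converge so that the branching process is subcritical and $\P(|\mathbf{T}^{(i)}| < \infty)=1$, legitimizing the local-limit-type asymptotics.
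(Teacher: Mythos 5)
Your outline for the limit itself tracks the paper's proof closely: both start from Corollary~\ref{cor:sol_multicomponent_smolochukowski} to reduce to $\P(\mathbf{T}^{(i)}=N\boldsymbol{\rho})$, both invoke a multi-type Lagrange/Otter--Dwass inversion (the paper's Lemma~\ref{lem:lagrange_inversion_cor}), both identify the product of offspring generating functions with a product of Poisson pgfs so that the coefficient extraction becomes a product of Poisson point masses, both treat the $\det K$ factor as a subexponential polynomial correction, and both finish with Stirling. That part is fine, though one should be careful that the prefactor is $p_i/n_i$, not $n_i/|\mathbf{n}|$; either way it is $e^{o(N)}$ so the rate is unaffected.

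Where you genuinely diverge is the convexity of $\Gamma$, and there your argument as stated contains an error. The paper proves convexity by writing $\Gamma$ as a sum of Legendre transforms $\Lambda_l(\boldsymbol{\rho}) = \sup_{\lambda}\{\lambda\rho_l - t\sigma_l(\boldsymbol{\rho})(e^\lambda-1)\}$ with $\sigma_l(\boldsymbol{\rho})=(AP\boldsymbol{\rho})_l$ linear in $\boldsymbol{\rho}$; each $\Lambda_l$ is a pointwise supremum of affine functions of $\boldsymbol{\rho}$, hence convex. You instead try to split $\Gamma$ into the negative entropy $\sum_l\rho_l\log\rho_l$, linear terms, and the piece $-\sum_l\rho_l\log(AP\boldsymbol{\rho})_l$, and you assert that $-\sum_l a_l\log b_l$ is jointly convex in $(a,b)$ on the positive orthant. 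That is false: $(a,b)\mapsto -a\log b$ has Hessian with determinant $-1/b^2<0$, so it is indefinite. The fix is to keep the entropy term and the cross term together: $a\log(a/b)$ \emph{is} jointly convex (it is the perspective of $-\log$), so $\sum_l\rho_l\log\bigl(\rho_l/(AP\boldsymbol{\rho})_l\bigr)$ is convex as a composition of the jointly convex relative entropy with the linear map $\boldsymbol{\rho}\mapsto(\boldsymbol{\rho},AP\boldsymbol{\rho})$; the remaining pieces of $\Gamma$ are affine on $\Delta_m$. With that regrouping your convexity argument is a valid alternative to the paper's Legendre-dual argument, and arguably more transparent since it identifies $\Gamma$ (up to affine terms) as a relative entropy; but as written the decomposition into ``negative entropy plus a separately convex term'' does not work.
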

Since $\Gamma(\boldsymbol{\rho})$ is convex, it attains a finite minimum ${\boldsymbol\rho}^*(t)=\min_{{\boldsymbol\rho} \in \Delta_m}\Gamma(\boldsymbol\rho)$, which generally depends on time $t$, monodisperse initial conditions $p_k$ and the kernel matrix $A$. The vector $\boldsymbol\rho^*$ gives the limiting configuration of clusters when their overall size $N$ tends to infinity. If $AP$ is a stochastic matrix, it can be shown that the localization direction $\boldsymbol\rho^*$ is a normalized eigenvector that does not depend on time, though this is not true in for general $AP$. The proof of the above corollaries is given in Section \ref{sec:proof_of_corollaries}.

\subsection{Relaxation of the assumptions}
 The requirements that $A$  has to be  be symmetric and irreducible are made without loss of generality. When $A$ is not irreducible, the system of ODEs can be broken into several independent subsystems that can be studied separately with out approach. This also means that there may be several critical points, one for each subsystems.
 For example, the kernel $A = I$ with $I$ being the identity matrix leads  to $m$ independent equations, which can be solved readily.
 Furthermore, even when $A$ is not symmetric, its action will summarised to $\frac{1}{2}(A+A^\top)$, as the (multidimensional) convolution is a commutative operation.
 
The assumption of monodisperse initial conditions is restrictive, although, we do see a possible avenue for relaxing it. Our branching process representation theory is applicable for initial conditions with full supports. For instance, in \cite{hoogendijk2023} we show that one can consider appropriately decaying initial conditions for the characteristic PDE of the monocomponent multiplicative Smoluchowski equation. Hence, in principle, it should be possible to repeat the proofs from this paper in the non-monodisprese setting, albeit with a cumbersome system for book keeping of the indices.

\section{Proofs}\label{sec:proofs}
\subsection{Proof of Proposition \ref{prop:reduced_equivalent}}\label{sec:proof_prop_reduced_equivalent}
\begin{proof}
By definition, $\mbf{m}(t) = \mbf{m}(0)$ for all $t \in (0, T_c)$ with $\mbf{m}$ being the mass vector of \eqref{eq:colored_smoluchowski}. Therefore, we can rewrite the second term in the RHS of \eqref{eq:colored_smoluchowski} as
\begin{equation*}
    \begin{split}
        \sum\limits_{\mbf{k} \in \mathbb{N}_0^m} \K(\mathbf{n}, \mathbf{k}) w_{\mathbf{n}} w_{\mathbf{k}} &= \sum\limits_{\mbf{k} \in \N_0^m} \mbf{n}^T A \mbf{k} w_{\mbf{n}} w_{\mbf{k}}
        = \mbf{n}^T A \mbf{m}(t) w_{\mbf{n}}.
    \end{split}
\end{equation*}
This shows that the reduced equation is equivalent to the original equation before gelation. 
Hence, if the reduced equation gels at $T_c$, so does the original equation. To show the reverse implication, suppose that the original equation gels at $T_c$ and that reduced equation gels at $T_c' > T_c$. Let $\mbf{m}'(t)$ be the mass vector of the reduced equation. By definition, $\mbf{m}'(t) = \mbf{m}'(0)$ for all $t \in (0, T_c')$ and $\mbf{m}'(0) = \mbf{m}(0)$. However, this means that we could turn the reduced equation back into the original equation, which contradicts the assumption that $T_c' > T_c$. 
\end{proof}

\subsection{Proof of Theorem \ref{thm:gf_transform}}\label{sec:proof_thm_gf_transform}
\begin{proof}
    The idea of the proof is to first derive a PDE for $U(t, \mbf{x})$ using the multicomponent Smoluchowski equation and then use the PDE for $U(t, \mbf{x})$ to derive a PDE for $\mbf{u}(t, \mbf{x})$.

    Consider $U(t, \mbf{x})$ as defined in \eqref{eq:gf_transform}. First, observe that 
    \begin{equation}\label{eq:gradient_U}
        \nabla U(t, \mbf{x}) = -\sum\limits_{\mbf{n}\in \N_0^m} \mbf{n} w_{\mbf{n}}(t) e^{-\mbf{x} \cdot \mbf{n}},
    \end{equation}
    where the gradient is taken with respect to $\mbf{x}$.
    Differentiating with respect to $t$ and substituting the right hand side of the multicomponent Smoluchowski equation with kernel $A$, we obtain
\begin{equation}\label{eq:deriv_t_U}
        \begin{split}
            \frac{\partial}{\partial t}U(t, \mbf{x}) &= \sum\limits_{\mbf{n} \in \N_0^m} \frac{\dint w_{\mbf{n}}}{\dint t} e^{-\mbf{x}\cdot \mbf{n}}\\
            &= \sum\limits_{\mbf{n}\in \N_0^m} \left(\frac{1}{2}\sum\limits_{\mbf{k}+\mbf{l} = \mbf{n}} \mbf{k}^T A \mbf{l} w_{\mbf{k}} w_{\mbf{l}} - \mbf{n}^T A \mbf{m}(0) w_{\mbf{n}}\right) e^{-\mbf{x} \cdot \mbf{n}}\\
            &= \underbrace{\frac{1}{2}\sum\limits_{\mbf{n}\in \N_0^m} \sum\limits_{\mbf{k}+\mbf{l} = \mbf{n}} \mbf{k}^T A \mbf{l} w_{\mbf{k}} w_{\mbf{l}} e^{-\mbf{x}\cdot \mbf{n}}}_{\boxed{W_1}} - \underbrace{\sum\limits_{\mbf{n}\in \N_0^m} \mbf{n}^T A \mbf{m}(0) w_{\mbf{n}} e^{-\mbf{x} \cdot \mbf{n}}}_{\boxed{W_2}}.
        \end{split}
    \end{equation}

    We will first look at term $\boxed{W_1}$. We use that $\mbf{k}+\mbf{l} = \mbf{n}$, the Cauchy product and the observation \eqref{eq:gradient_U} to obtain that
    \begin{equation}\label{eq:W1_term}
        \begin{split}
            \boxed{W_1} &= \frac{1}{2}\sum\limits_{\mbf{n} \in \N_0^m}\sum\limits_{\mbf{k}+\mbf{l} = \mbf{n}} (\mbf{k} w_{\mbf{k}} e^{-\mbf{x}\cdot \mbf{k}})^T A (\mbf{l}w_{\mbf{l}} e^{-\mbf{x}\cdot \mbf{l}})
            = \frac{1}{2}(\nabla U)^T A (\nabla U).
        \end{split}
    \end{equation}
    Next, we manipulate term $\boxed{W_2}$. Again, using observation \eqref{eq:gradient_U}, we obtain
    \begin{equation}\label{eq:W2_term}
        \begin{split}
            \boxed{W_2} &= \sum\limits_{\mbf{n} \in \N_0^m} (\mbf{n} w_{\mbf{n}} e^{-\mbf{x} \cdot \mbf{n}})^T A \mbf{m}(0)
            = - (\nabla U)^T A \mbf{m}(0).
        \end{split}
    \end{equation}
    Combining both \eqref{eq:W1_term} and \eqref{eq:W2_term} in \eqref{eq:deriv_t_U} results in the following PDE for $U(t, \mbf{x})$:
    \begin{equation}
        \frac{\partial}{\partial t} U(t, \mbf{x}) = \frac{1}{2}(\nabla U)^T A (\nabla U) + (\nabla U)^T A \mbf{m}(0).
    \end{equation}
    Next, use $\mbf{u} = - \nabla U$ to obtain
    \begin{equation}
        \begin{split}
            \frac{\partial \mbf{u}}{\partial t} &= - \nabla\left(\frac{\partial U}{\partial t}\right)\\
            &= - \nabla \left(\frac{1}{2}(\mbf{u}^T A \mbf{u}) - \nabla (\mbf{u}^T A \mbf{m}(0))\right)\\
            &= - ((\nabla \mbf{u}(t, \mbf{x})) A \mbf{u} - (\nabla \mbf{u}) A \mbf{m}(0))\\
            &= - (\nabla \mbf{u}) A (\mbf{u}-\mbf{m}(0)).
        \end{split}
    \end{equation}
    This finishes the first part of the theorem.

    Next, we will show that any power series solution of \eqref{eq:colored_smoluchowski_pde} of the form
    \begin{equation}\label{eq:cand_sol}
        \mbf{u}(t, \mbf{x}) = \sum\limits_{\mbf{n} \in \N_0^m} \mbf{n} w_{\mbf{n}} e^{-\mbf{n} \cdot \mbf{x}}
    \end{equation}
    solves the multicomponent Smoluchowski equation. By uniform convergence,
    we observe that
    \begin{equation}\label{eq:cand_sol_deriv}
        \nabla \mbf{u} = -\sum\limits_{\mbf{n} \in \N_0^m} \mbf{n} \mbf{n}^T w_{\mbf{n}} e^{-\mbf{n} \cdot \mbf{x}}.
    \end{equation}
    Therefore, by substituting \eqref{eq:cand_sol} and \eqref{eq:cand_sol_deriv} into the PDE \eqref{eq:colored_smoluchowski_pde} we obtain:
    \begin{equation}
        \begin{split}
            &(\nabla \mbf{u}) A \mbf{u}
            = -\sum\limits_{\mbf{k}\in\N_0^m} \sum\limits_{\mbf{l} \in \N_0^m} \mbf{k} \mbf{k}^T A \mbf{l} w_{\mbf{k}} w_{\mbf{l}} e^{-\mbf{k} \cdot \mbf{x}}\\
            &= -\frac{1}{2}\left(\sum\limits_{\mbf{k}\in\N_0^m} \sum\limits_{\mbf{l} \in \N_0^m} (\mbf{k} \mbf{k}^T) A \mbf{l} w_{\mbf{k}} w_{\mbf{l}} e^{-\mbf{k} \cdot \mbf{x}} e^{-\mbf{l} \cdot \mbf{x}} + \sum\limits_{\mbf{k}\in\N_0^m} \sum\limits_{\mbf{l} \in \N_0^m} (\mbf{l} \mbf{l}^T) A \mbf{k} w_{\mbf{k}} w_{\mbf{l}} e^{-\mbf{k} \cdot \mbf{x}} e^{-\mbf{l} \cdot \mbf{x}}\right),
        \end{split}
    \end{equation}
    where we have split the sum into two equal pieces and relabelled the vectors. Next, we use the symmetry of $A$ to obtain that
    \begin{equation}
            (\nabla \mbf{u}) A \mbf{u} = -\frac{1}{2}\left(\sum\limits_{\mbf{k}\in\N_0^m} \sum\limits_{\mbf{l} \in \N_0^m} (\mbf{k}+\mbf{l}) (\mbf{k}^T A \mbf{l}) w_{\mbf{k}} w_{\mbf{l}} e^{-\mbf{k} \cdot \mbf{x}} e^{-\mbf{l} \cdot \mbf{x}} \right).
    \end{equation}
    Partitioning the sum, we may write
    \begin{equation}
        (\nabla \mbf{u}) A \mbf{u} = -\frac{1}{2}\left(\sum\limits_{\mbf{n} \in\N_0^m} \mbf{n} \sum\limits_{\mbf{k} + \mbf{l} = \mbf{n}} \mbf{k}^T A \mbf{l} w_{\mbf{k}} w_{\mbf{l}} e^{-\mbf{n} \cdot \mbf{x}}\right).
    \end{equation}
    For the second term in \eqref{eq:colored_smoluchowski_pde}, substituting \eqref{eq:cand_sol_deriv} yields
    \begin{equation}
            (\nabla \mbf{u}) A \mbf{m}(0) = -\sum\limits_{\mbf{n} \in \N_0^m} \mbf{n} (\mbf{n}^T A \mbf{m}(0)) e^{- \mbf{n} \cdot \mbf{x}}.
    \end{equation}
    Combining all terms, we obtain the equality
    \begin{equation}
        \begin{split}
            \frac{\partial \mbf{u}}{\partial t} &= \sum\limits_{\mbf{n} \in \N_0^m} \mbf{n} \frac{\dint w_{\mbf{n}}}{\dint t} e^{-\mbf{n} \cdot \mbf{x}}\\
            &= \sum\limits_{\mbf{n} \in \N_0^m} \mbf{n} \left(\frac{1}{2}\sum\limits_{\mbf{k} + \mbf{l} = \mbf{n}} \mbf{k}^T A \mbf{l} w_{\mbf{k}} w_{\mbf{l}} - \mbf{n}^T A \mbf{m}(0)\right) e^{-\mbf{n} \cdot \mbf{x}}\\
            &= -(\nabla \mbf{u}) A (\mbf{u}-\mbf{m}(0)).
        \end{split}
    \end{equation}
    Therefore, 
    \begin{equation}
        0 = \sum\limits_{\mbf{n} \in \N_0^m} \mbf{n} e^{-\mbf{n} \cdot \mbf{x}} \left(\frac{\dint w_{\mbf{n}}}{\dint t} - \frac{1}{2} \sum\limits_{\mbf{k} + \mbf{l} = \mbf{n}} \mbf{k}^T A \mbf{l} + \mbf{n}^T A \mbf{m}(0) \right).
    \end{equation}
    It follows immediately that $\{w_{\mbf{n}}(t)\}_{\mbf{n} \in \N_0^m}$ satisfies the multicomponent Smoluchowski equation.
\end{proof}

\subsection{Proof of Theorem \ref{thm:multitype_branching_process}}\label{sec:proof_thm_multitype_branching_process}
Before proving the theorem, we formulate the following consequence of the implicit function theorem that is appropriate for our setting
\begin{lemma}
    Let $U \subseteq \R^{2m+1}$ be an open set and $\mbf{F}: U \to \R^m$ be a continuously differentiable map. Let $V \subseteq \R^{m+1}$ be an open set and consider the continuously differentiable map $\mbf{g}: V \to \R^m$ such that $\mbf{F}((t, \mbf{x}), \mbf{g}(t, \mbf{x})) = 0$. Then,
    \begin{equation}
        \frac{\partial \mathbf{g}}{\partial t}(t, \mathbf{x}) = - \left[\frac{\partial \mathbf{F}}{\partial \mathbf{g}}(t, \mathbf{x}, \mathbf{g}(t, \mathbf{x}))\right]_{m \times m}^{-1} \cdot \frac{\partial \mathbf{F}}{\partial t}(t, \mathbf{x}, \mathbf{g}(t, \mathbf{x}))
    \end{equation}
    and
    \begin{equation}
        J_{\mathbf{g}}(t, \mathbf{x}) = -\left[\frac{\partial \mbf{F}}{\partial \mbf{g}}(t, \mbf{x}, \mbf{g}(t, \mbf{x}))\right]_{m\times m}^{-1} \cdot \frac{\partial \mbf{F}}{\partial \mbf{x}}(t, \mbf{x}, \mbf{g}(t, \mbf{x})).
    \end{equation}
\end{lemma}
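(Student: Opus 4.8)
The statement is the standard implicit-differentiation identity, and the only point that needs care is that its right-hand sides already presuppose that the $m\times m$ block $\partial\mbf{F}/\partial\mbf{g}$, evaluated at points of the form $(t,\mbf{x},\mbf{g}(t,\mbf{x}))$, is invertible; I would state this invertibility explicitly as a running hypothesis at the start of the proof, since it is in any case forced by the very form of the conclusion. One could instead invoke the classical implicit function theorem, but as $\mbf{g}$ is already assumed to be $C^1$ and to satisfy $\mbf{F}(t,\mbf{x},\mbf{g}(t,\mbf{x}))=0$ identically on $V$, a direct two-line chain-rule argument is more transparent.

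The plan is as follows. First I would note that, since $\mbf{F}$ is $C^1$ on $U$ and $\mbf{g}$ is $C^1$ on $V$ with graph contained in $U$, the composite map $\Phi(t,\mbf{x}):=\mbf{F}(t,\mbf{x},\mbf{g}(t,\mbf{x}))$ is a $C^1$ map $V\to\R^m$. By hypothesis $\Phi\equiv\mbf{0}$, so every first-order partial derivative of $\Phi$ vanishes on $V$.

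Next, writing the arguments of $\mbf{F}$ as $(t,\mbf{x},\mbf{y})$ with $\mbf{y}$ the slot into which $\mbf{g}$ is substituted, I would apply the multivariate chain rule to $\Phi$. Differentiation in $t$ yields
\[
\mbf{0}=\frac{\partial\mbf{F}}{\partial t}+\frac{\partial\mbf{F}}{\partial\mbf{g}}\,\frac{\partial\mbf{g}}{\partial t},
\]
and differentiation in the spatial variables yields
\[
\mbf{0}=\frac{\partial\mbf{F}}{\partial\mbf{x}}+\frac{\partial\mbf{F}}{\partial\mbf{g}}\,J_{\mbf{g}},
\]
all derivatives of $\mbf{F}$ being evaluated at $(t,\mbf{x},\mbf{g}(t,\mbf{x}))$; here $\partial\mbf{F}/\partial t$ and $\partial\mbf{g}/\partial t$ are $m\times 1$ columns while $\partial\mbf{F}/\partial\mbf{x}$, $\partial\mbf{F}/\partial\mbf{g}$ and $J_{\mbf{g}}$ are $m\times m$ matrices, so all the products are well defined. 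Left-multiplying both identities by $[\partial\mbf{F}/\partial\mbf{g}]^{-1}$ and solving for $\partial\mbf{g}/\partial t$ and for $J_{\mbf{g}}$ gives the two displayed formulas.

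Since each step is an identity between $C^1$ maps and the only ingredient beyond the chain rule is matrix inversion, there is no genuine obstacle; the one thing to watch is the bookkeeping of the shapes of the derivative blocks, so that the matrix products compose in the correct order. If useful, I would close by remarking that in the intended application one takes $\mbf{F}$ to be the map cutting out the implicit (branching/characteristic) relation for $\mbf{u}$, so that $\mbf{g}=\mbf{u}$ and the lemma converts that relation into the evolution equation \eqref{eq:colored_smoluchowski_pde}.
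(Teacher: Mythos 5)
Your proof is correct; the paper states this lemma without proof, presenting it only as ``a consequence of the implicit function theorem,'' and your chain-rule derivation is exactly the standard argument that justification implicitly relies on. Your observation that the invertibility of $\partial\mbf{F}/\partial\mbf{g}$ along the graph of $\mbf{g}$ must be added as an explicit hypothesis is a fair and worthwhile correction to the statement as written.
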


\begin{proof}[Proof of Theorem \ref{thm:multitype_branching_process}]
    Define $\mbf{G}(t, \mbf{x}) := (G_{\mbf{T}^{(1)}}, \ldots, G_{\mbf{T}^{(m)}})$, where $G_{\mbf{T}^{(i)}}$ is the probability generating function of $\mbf{T}^{(i)}$. Moreover, let $P := \mathrm{diag}(p_1, \ldots, p_m)$. Observe that
    \begin{equation}
        \mbf{u}(t, \mbf{x}) = P \mbf{G}(t, \mbf{x}).
    \end{equation}
    Therefore, it suffices to show that
    \begin{equation}\label{eq:red_pde}
        \frac{\partial \mbf{G}}{\partial t}(t, \mbf{x}) = -(\nabla \mbf{G}(t, \mbf{x})) A P (\mbf{G}(t, \mbf{x}) - \mathbf{1}),
    \end{equation}
    with $\mbf{1}$ being a vector of ones of length $m$, is solved by $\mbf{G}(t, \mbf{x})$ as defined above.

    From the theory of branching processes \cite{good_1955}, the following implicit system holds for vector $\mbf{G}(t, \mbf{x}) = (G_{\mbf{T}^{(1)}}, \ldots, G_{\mbf{T}^{(m)}})$:
    \begin{equation}\label{eq:implicit_system}
        \begin{split}
            G_{\mbf{T}^{(1)}} &= e^{-x_1} G_{\mbf{X}_1}(G_{\mbf{T}^{(1)}}, \ldots, G_{\mbf{T}^{(m)}})\\
            &\stackrel{\vdots}{\hphantom{=}}\\
            G_{\mbf{T}^{(m)}} &= e^{-x_m} G_{\mbf{X}_m}(G_{\mbf{T}^{(1)}}, \ldots, G_{\mbf{T}^{(m)}}).
        \end{split}
    \end{equation}
    We will make use of the implicit differentiation lemma above to show that the branching process solves the PDE \eqref{eq:red_pde}.

    Define $\mbf{F}: \R^{2m+1} \to \R^m$ to be the map given by
    \begin{equation}
        F_k(t, \mbf{x}, \mbf{y}) = y_k - e^{-x_k} G_{\mbf{X}_k}(y_1, \ldots, y_m), \qquad k \in [m].
    \end{equation}
    $\mbf{F}$ is continuous differentiable. Let $\mathbf{G}(t, \mathbf{x}) = (G_{\mbf{T}^{(1)}}, \ldots, G_{\mbf{T}^{(m)}})$ be as in the above Lemma. By the implicit function theorem, $\mbf{G}(t, \mbf{x})$ is continuously differentiable as long as $\frac{\partial \mbf{F}}{\partial \mbf{g}}$ is invertible. Furthermore, $\mbf{F}((t, \mbf{x}), \mbf{G}(t, \mbf{x})) = 0$ for all $t \in (0, T_c)$ and $\mbf{x} \in [0, \infty)^m$. Therefore, we can apply the above Lemma to implicitly differentiate \eqref{eq:implicit_system}. To this end, we calculate the partial derivatives of $\mbf{F}$ with respect to $t$ and $\mbf{x}$. For any $i \in [m]$, we have
    \begin{equation}\label{eq:der_F_t}
        \frac{\partial F_i}{\partial t}((t, \mbf{x}), \mbf{G}(t, \mbf{x})) = - e^{-x_i} G_{\mbf{X}^{(i)}}\sum\limits_{l=1}^m A_{il} p_l \left(G_{\mbf{T}^{(l)}}-1\right).\\
    \end{equation}
    Similarly, for any $i, j \in [m]$,
    \begin{equation}\label{eq:der_F_x}
        \begin{split}
            \frac{\partial F_i}{\partial x_j}((t, \mbf{x}), \mbf{G}(t, \mbf{x})) &= -\delta_{i, j} e^{-x_i} G_{\mbf{X}_i},
        \end{split}
    \end{equation}
    which is a diagonal matrix. Observe that by combining \eqref{eq:der_F_t} and \eqref{eq:der_F_x}, we obtain
    \begin{equation}
        \frac{\partial \mbf{F}}{\partial t}((t, \mbf{x}), \mbf{G}(t, \mbf{x})) = - \frac{\partial \mbf{F}}{\partial \mbf{x}} A (\mbf{u}(t, \mbf{x}) - \mbf{m}(0)).
    \end{equation}
    Multiplying both sides by $-\left[\frac{\partial \mbf{F}}{\partial \mbf{G}}((t, \mbf{x}), \mbf{G}(t, \mbf{x}))\right]_{m \times m}^{-1}$, we obtain
    \begin{equation}
        \frac{\partial \mbf{G}}{\partial t}((t, \mbf{x}), \mbf{G}(t, \mbf{x})) = -(\nabla \mbf{G}(t, \mbf{x})) A P (\mbf{G}(t, \mbf{x}) - \mbf{m}(0)),
    \end{equation}
    which finishes the proof of the first part of the Theorem.

    In order to show that the solution is smooth up until the critical time $T_c = \|AP\|_2^{-1}$, we look at 
    \begin{equation*}
        \begin{split}
        \frac{\partial \mbf{F}}{\partial \mbf{g}}(t, \mbf{x}, \mbf{G}(t, \mbf{x})) &= I - t A \, \mathrm{diag}(u_1(t, \mbf{x}), \ldots, u_m(t, \mbf{x}))\\
        &= I - t A P\, \mathrm{diag}(G_1(t, \mbf{x}), \ldots, G_m(t, \mbf{x})).
        \end{split}
    \end{equation*}
    Since $AP \,\mathrm{diag}(u_1(t, \mbf{x}), \ldots, u_m(t, \mbf{x}))$ is a matrix with exclusively non-negative entries, the above quantity is singular for $(t, \mbf{x})$ as soon as $t = \|A P \mbf{G}(t, \mbf{x})\|_2^{-1}$ by the Perron-Frobenius theorem. Since $\|\mbf{G}(t, \mbf{x})\|_2 \leq 1$, it follows that $T_c = \|AP\|_2^{-1}$ is the critical time.
\end{proof}

\subsection{Proofs of Corollaries \ref{cor:gelation_time}, \ref{cor:sol_multicomponent_smolochukowski} and \ref{cor:localization}}\label{sec:proof_of_corollaries}

\begin{proof}[Proof of Corollary \ref{cor:gelation_time}]
    Theorem \ref{thm:multitype_branching_process} shows that the gelation time of the multicomponent Smoluchowski equation is given by $T_c = \|AP\|_2^{-1}$. To show that it is equal to the critical time of the branching process, recall that $$G_{\mbf{X}_k}(\mbf{s}) = \prod\limits_{l=1}^m \exp(t A_{kl} p_l (s_l-1)).$$ Therefore, the matrix of expected offspring, denoted by $T_{\kappa}$ is given by
    \begin{equation}
        (T_{\kappa})_{i,j \in [m]} = t A_{ij} p_j.
    \end{equation}
    Results on multi-type branching processes (see \cite{harris_1963, athreya_1972}) say that a branching process is 
    critical when $\|T_{\kappa}\|_2 = 1$, which is equivalent to $t\|AP\|_2 = 1$.  The latter equation is satisfied at $t=T_c$.
\end{proof}

\begin{proof}[Proof of Corollary \ref{cor:sol_multicomponent_smolochukowski}]
    Combining Theorem \ref{thm:multitype_branching_process} and the second part of Theorem \ref{thm:gf_transform}, we immediately obtain the result of the corollary.
\end{proof}
Before proving Corollary \ref{cor:localization}, we introduce the following lemma, which is a consequence of the Lagrange-Good inversion formula \cite{good_1960}. 
\begin{lemma}\label{lem:lagrange_inversion_cor}
    Given a multi-type branching process satisfying the implicit system of equations \eqref{eq:implicit_system}, for any $\mbf{n} \in \N^m$ and $i \in [m],$
    \begin{equation}
        \P(\mbf{T}^{(i)} = \mbf{n}) = [\mbf{r}^{\mbf{n}}]r_i \mathrm{det}(K(\mbf{r})) (G_{\mbf{X}_1}(\mbf{r}))^{n_1} \cdot \ldots \cdot (G_{\mbf{X}_m}(\mbf{r}))^{n_m},
    \end{equation}
    where 
    \begin{equation}
        K(\mbf{r}):=\left[\delta_{ij} - \frac{r_i}{G_{\mbf{X}_i}(\mbf{r})} \frac{\partial G_{\mbf{X}_i}}{\partial r_j}\right]_{1 \leq i, j \leq m}.
    \end{equation}
\end{lemma}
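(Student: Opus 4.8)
\textbf{Proof proposal for Lemma~\ref{lem:lagrange_inversion_cor}.}
The plan is to recognize that the implicit system \eqref{eq:implicit_system} has exactly the shape needed to apply the multivariate Lagrange--Good inversion formula. Indeed, writing $\phi_k(\mbf{r}) := e^{-x_k} G_{\mbf{X}_k}(\mbf{r})$ treated as a formal power series in $r_1,\dots,r_m$ (with the $e^{-x_k}$ absorbed as constants, or alternatively setting $x_k = 0$ and tracking the size variable separately), the system reads $G_{\mbf{T}^{(k)}} = r_k\,\phi_k(\mbf{G})$ after the standard substitution identifying the generating variable; here $G_{\mbf{T}^{(k)}}$ is the unique formal power series solution with no constant term, which is guaranteed to exist because each $\phi_k$ has nonzero constant term. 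This is precisely the setting of Good's formula \cite{good_1960}, which expresses the coefficients of any function of the solution vector in terms of the coefficients of the $\phi_k$.

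First I would state the Lagrange--Good formula in the form: for a formal power series $H$, the coefficient $[\mbf{r}^{\mbf{n}}] H(\mbf{G}(\mbf{r}))$ equals $[\mbf{r}^{\mbf{n}}]\, H(\mbf{r})\, \det(K(\mbf{r}))\, \prod_{k=1}^m \phi_k(\mbf{r})^{n_k}$, where $K(\mbf{r})$ is the matrix $\big[\delta_{ij} - (r_i/\phi_i(\mbf{r}))\,\partial\phi_i/\partial r_j\big]_{ij}$. Then I would specialize to $H(\mbf{r}) = r_i$, so that $H(\mbf{G}(\mbf{r})) = G_{\mbf{T}^{(i)}}$, and extract the $\mbf{r}^{\mbf{n}}$ coefficient. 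Since $\P(\mbf{T}^{(i)} = \mbf{n})$ is by definition the coefficient of $\mbf{r}^{\mbf{n}}$ in the generating function $G_{\mbf{T}^{(i)}}(\mbf{r})$ (this is exactly what it means for $G_{\mbf{T}^{(i)}}$ to be the probability generating function of the total progeny started from type $i$), the left-hand side of the claimed identity is $[\mbf{r}^{\mbf{n}}] G_{\mbf{T}^{(i)}}(\mbf{r})$, and the right-hand side is what Good's formula produces. Finally, I would note that $\partial\phi_i/\partial r_j = e^{-x_i}\,\partial G_{\mbf{X}_i}/\partial r_j$ so the ratio $(r_i/\phi_i)\,\partial\phi_i/\partial r_j = (r_i/G_{\mbf{X}_i})\,\partial G_{\mbf{X}_i}/\partial r_j$, the exponential factors cancelling; this is why $K(\mbf{r})$ can be written with $G_{\mbf{X}_i}$ in place of $\phi_i$, matching the statement. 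The factors $\phi_k(\mbf{r})^{n_k}$ carry the $e^{-x_k n_k}$ which, after the coefficient extraction, is exactly the bookkeeping for evaluating the generating function, so when stated purely in terms of $G_{\mbf{X}_k}$ as in the lemma the formula is the coefficient identity above.

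The main obstacle is bookkeeping and justification rather than any deep difficulty: one must check that the hypotheses of the Lagrange--Good formula genuinely hold here, namely that each $G_{\mbf{X}_k}$ has nonzero constant term (true, since $G_{\mbf{X}_k}(\mbf{0}) = \prod_l \exp(-tA_{kl}p_l) > 0$) so that the implicit system has a unique formal-power-series solution, and that the branching-process total-progeny generating functions $G_{\mbf{T}^{(i)}}$ really coincide with that formal solution --- this is the classical fact from \cite{good_1955} already cited around \eqref{eq:implicit_system}. One should also be slightly careful about whether the formula is applied at the level of formal power series in $\mbf{r}$ with $t$ fixed (the cleanest route), and only afterwards note that for $t < T_c$ these series converge, so the identity holds as an identity of analytic functions as well. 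Modulo citing Good's theorem in the appropriate multivariate form, the argument is then a direct substitution with no remaining computation.
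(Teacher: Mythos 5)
Your proof takes essentially the same route as the paper: the paper simply introduces the lemma as ``a consequence of the Lagrange--Good inversion formula \cite{good_1960}'' without further argument, and your proposal correctly fills in the application---use $H(\mbf{r})=r_i$, read the implicit system \eqref{eq:implicit_system} with $r_k:=e^{-x_k}$ so it becomes $G_{\mbf{T}^{(k)}}(\mbf{r}) = r_k\,G_{\mbf{X}_k}\bigl(\mbf{G}(\mbf{r})\bigr)$, and extract $[\mbf{r}^{\mbf{n}}]$. One small wobble worth fixing: defining $\phi_k(\mbf{r}):=e^{-x_k}G_{\mbf{X}_k}(\mbf{r})$ and then writing $G_{\mbf{T}^{(k)}}=r_k\,\phi_k(\mbf{G})$ double-counts the factor $e^{-x_k}=r_k$; the clean reading (which your own cancellation remark effectively recovers) is simply $\phi_k:=G_{\mbf{X}_k}$ with no exponential prefactor, and then the lemma is the verbatim Lagrange--Good coefficient identity.
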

\begin{proof}[Proof of Corollary \ref{cor:localization}]
    Note that by Corollary \ref{cor:sol_multicomponent_smolochukowski}, we have that for any $\mbf{n} \in \N^m$ and $i \in [m]$,
    \begin{equation}
        w_{\mbf{n}} = \frac{p_i}{n_i}\P(\mbf{T}^{(i)} = \mbf{n}).
    \end{equation}
    We will analyze $\P(\mbf{T}^{(i)} = \mbf{n})$ further. Using Lemma \ref{lem:lagrange_inversion_cor}, we obtain that
    \begin{equation}\label{eq:tot_prog_prob_1}
        \P(\mbf{T}^{(i)} = \mbf{n}) = [\mbf{r}^{\mbf{n}}]r_i \mathrm{det}(K(\mbf{r})) G_{\mbf{X}_1}^{n_1} \cdot \ldots \cdot G_{\mbf{X}_m}^{n_m}.
    \end{equation}
    It can be readily verified that 
    \begin{equation}
        K(\mbf{r}) = [\delta_{ij} - r_i t A_{ij} p_j]_{1 \leq i, j \leq m}.
    \end{equation}
    Moreover, notice that $\det K(\mbf{r})$ can be written as
    \begin{equation}
        \det K(\mbf{r}) = \sum\limits_{I \in \mathcal{P}([m])} c_I r_I,
    \end{equation}
    where $r_I = \prod_{i\in I} r_i$ and $c_I$ being a coefficient dependent on $t$, $A$ and $\{p_i\}_{i \in [m]}$. Therefore, we may write \eqref{eq:tot_prog_prob_1} as 
    \begin{equation}\label{eq:tot_prog_prob_2}
        \P(\mbf{T}^{(i)} = \mbf{n}) = \sum\limits_{I \in \mathcal{P}([m])} c_I [\mbf{r}^{\mbf{n}}]r_I r_i G_{\mbf{X}_1}^{n_1} \cdot \ldots \cdot G_{\mbf{X}_m}^{n_m}.
    \end{equation}
    For the next step in our analysis, let $Z_l \sim \mathrm{Poi}\left(t \sum_{k=1}^m n_k A_{kl} p_l\right)$ for each $l \in [m]$. By defining
    \begin{equation}
        \sigma_l:= \sum\limits_{k=1}^m \rho_k A_{kl} p_l,
    \end{equation}
    we see that $Z_l \sim \mathrm{Poi}\left(t N \sigma_l\right)$. Observe that the probability generating function of $Z_l$ is given by
    \begin{equation}
        G_{Z_l}(r) = \exp\left(t N \sigma_l (r-1)\right)
    \end{equation}
    Then, we note that the product $G_{\mbf{X}_1}(\mbf{r})^{n_1} \cdot \ldots \cdot G_{\mbf{X}_m}(\mbf{r})^{n_m}$ can be written as
    \begin{equation}\label{eq:factorization_pgfs_multi}
        \begin{split}
            G_{\mbf{X}_1}(\mbf{r})^{n_1} \cdot \ldots \cdot G_{\mbf{X}_m}(\mbf{r})^{n_m} &= \prod\limits_{k=1}^m \prod\limits_{l=1}^m \exp\left(t n_k A_{kl} p_l (r_l-1)\right)\\
            &= \exp\left(t \sum\limits_{l=1}^m \sum\limits_{k=1}^m n_k A_{kl} p_l (r_l-1)\right)\\
            &= \prod\limits_{l=1}^m \exp\left(t N \sigma_l (r_l-1)\right)\\
            &= \prod\limits_{l=1}^m G_{Z_l}(r_l).
        \end{split}
    \end{equation}
    Combining \eqref{eq:tot_prog_prob_2} and \eqref{eq:factorization_pgfs_multi} yields
    \begin{equation}
        \begin{split}
            \P(\mbf{T}^{(i)} = \mbf{n}) &= \sum\limits_{I \in \mathcal{P}([m])} c_I [\mbf{r}^{\mbf{n}}] r_I r_i G_{Z_1}(r_1) \cdot \ldots \cdot G_{Z_m}(r_m)\\
            &= \sum\limits_{I \in \mathcal{P}([m])} c_I [\mbf{r}^{\mbf{n}}] \left(\prod\limits_{l=1}^m r_l^{\mathbbm{1}_I(l) + \delta_{li}}\right) G_{Z_1}(r_1) \cdot \ldots \cdot G_{Z_m}(r_m)\\
            &= \sum\limits_{I \in \mathcal{P}([m])} c_I \prod\limits_{l=1}^m [r_l^{n_l}] r_l^{\mathbbm{1}_I(l) + \delta_{li}} G_{Z_l}(r_l)\\
            &= \sum\limits_{I \in \mathcal{P}([m])} c_I \prod\limits_{l=1}^m \P(Z_l = n_l - (\mathbbm{1}_I(l) + \delta_{li}))
        \end{split}
    \end{equation}
    The last line implies that we need to analyze 
    \begin{equation}
        \P(Z_l = n_l), \qquad \P(Z_l = n_l-1), \qquad \P(Z_l = n_l - 2).
    \end{equation}
    We are going to relate the latter two terms to the first term, making explicit use of the fact that $Z_l \sim \mathrm{Poi}\left(t N \sigma_l\right)$ for each $l \in [m]$. We obtain
    \begin{equation}
        \P(Z_l = n_l-1) = e^{-t N \sigma_l} \frac{(t N \sigma_l)^{n_l-1}}{(n_l-1)!} = \frac{\rho_l}{t \sigma_l} \P(Z_l = n_l)
    \end{equation}
    and
    \begin{equation}
        \P(Z_l = n_l-2) = e^{-\lambda_l} \frac{\lambda_l^{n_l-2}}{(n_l-2)!} = \left(\frac{\rho_l^2}{t^2 \sigma_l^2} - \frac{\rho_l}{t^2 N \sigma_l^2}\right) \P(Z_l = n_l)
    \end{equation}
    Therefore, we obtain that
    \begin{equation}
        \log \P(\mbf{T}^{(i)} = \mbf{n}) = \sum\limits_{l=1}^m \log \P(Z_l = n_l) + \log \sum\limits_{I \in \mathcal{P}([m])} c_I \mathcal{O}\left(1-\frac{1}{N}\right).
    \end{equation}
    Observe that the second term will tend to zero as we divide by $N$ and sending $N \to \infty$. Using the explicit formula of the Poisson distribution together with Stirling's formula $$ n! = (1+o(1)) \sqrt{2\pi n} n^n e^{-n},$$ we obtain
    \begin{equation}
        \begin{split}
            \sum\limits_{l=1}^m \log \P(Z_l = N \rho_l) &= \sum\limits_{l=1}^m \log\left(\frac{(t N \sigma_l)^{N \rho_l}}{(N \rho_l)!} e^{-t N \sigma_l}\right)\\
            &= N \sum\limits_{l=1}^m  \left(\rho_l \log(t N \sigma_l) - \rho_l \log(N \rho_l) + \rho_l \right)+ \mathcal{O}\left(\log(N)\right)\\
            &= N \sum\limits_{l=1}^m
            \left(\rho_l \log\left(\frac{t \sigma_l}{\rho_l}\right) - t \sigma_l + \rho_l\right) + 
            \mathcal{O}\left(\log(N)\right)
        \end{split}
    \end{equation}
    Dividing both sides by $N$ and sending $N \to \infty$ finishes the first part of the proof. 
    
    It remains to prove that 
    \begin{equation}
        \Gamma(\mathbf{\rho}) = \sum\limits_{l=1}^m \rho_l \log\left(\frac{\rho_l}{t \sigma_l}\right) + t \sigma_l - 1
    \end{equation}
    is convex. The idea of the proof is to use objects that are very reminiscent of objects that also appear in large devations theory. Consider the function 
    \begin{equation}
        \Lambda_l(\boldsymbol{\rho}) = \sup\limits_{\lambda \in \R} \{\lambda \rho_l - \sigma_l(e^{\lambda}-1)\}.
    \end{equation}
    Now, we claim that the equality $\Gamma(\boldsymbol{\rho}) = \sum_{l=1}^m \Lambda_l(\boldsymbol{\rho})$ holds and  that $\Lambda_l(\boldsymbol{\rho})$ is convex for each $l \in [m]$. The first claim follows immediately by differentiating the argument within the supremum of $\Lambda_l$, equating  to zero and solving for $\lambda$. It can be readily checked that this yields a unique maximumum. The second claim follows by considering $\boldsymbol{\rho}^{(1)}, \boldsymbol{\rho}^{(2)} \in \Delta_m$ and $\nu \in [0, 1]$. By using subadditivity of the supremum, we obtain that
    \begin{equation}
        \Lambda_l(\nu \boldsymbol{\rho}^{(1)} + (1-\nu) \boldsymbol{\rho}^{(2)}) \leq \nu \Lambda_l(\boldsymbol{\rho}^{(1)}) + (1-\nu) \Lambda_l(\boldsymbol{\rho}^{(2)}).
    \end{equation}
    Therefore, $\Lambda_l$ is convex for each $l \in [m]$. Since $\Gamma$ is a sum of convex functions, it is convex as well.
\end{proof}

\section*{Acknowledgements}
This publication is part of the project ``Random graph representation of nonlinear evolution problems"  of the research programme Mathematics Cluster/NDNS+ which is financed by the Dutch Research Council (NWO). The authors thank Mike de Vries for his help with the proof of the Corollary \ref{cor:localization}.

\bibliographystyle{alpha}
\bibliography{references.bib}

\end{document}